\numberwithin{equation}{section}
\newtheorem{proposition}{Proposition}[section]
\newtheorem{remark}{Remark}[section]
\theoremstyle{definition}
\DeclareFontFamily{U}{MnSymbolC}{}
\DeclareSymbolFont{MnSyC}{U}{MnSymbolC}{m}{n}
\DeclareFontShape{U}{MnSymbolC}{m}{n}{
    <-6>  MnSymbolC5
   <6-7>  MnSymbolC6
   <7-8>  MnSymbolC7
   <8-9>  MnSymbolC8
   <9-10> MnSymbolC9
  <10-12> MnSymbolC10
  <12->   MnSymbolC12}{}
\DeclareMathSymbol{\intprod}{\mathbin}{MnSyC}{'270}
\newcommand{\wt}[1]{\widetilde{#1}}
\newcommand{\wh}[1]{\widehat{#1}}
\newcommand{\mc}[1]{\mathcal{#1}}
\newcommand{\bs}[1]{\boldsymbol{#1}}
\newcommand{\mcal}[1]{\mc{#1}}
\newcommand{\scp}[2]{\left<#1\,,\,#2\right>}
\newcommand{\ad}{\operatorname{ad}}
\DeclareMathOperator{\diff}{d\!}
\def\p{{\partial}}
\def\bB{{\mathbf{B}}}
\def\bm{{\mathbf{m}}}
\def\bR{{\mathbf{R}}}
\def\bu{{\mathbf{u}}}
\def\bx{{\mathbf{x}}}
\def\p{\partial}
\pgfplotsset{compat=1.16}
\def\bm{\boldsymbol{m}}
\def\bu{\boldsymbol{u}}
\def\bx{\boldsymbol{x}}
\def\bB{\boldsymbol{B}}
\def\bxi{\boldsymbol{\xi}}
\begin{document}

\title{\textbf{Plasma dynamics in thin domains}}
\author{Darryl D. Holm\thanks{Department of Mathematics, Imperial College London} , Ruiao Hu\footnotemark[1] , and Oliver D. Street\thanks{Grantham Institute, Imperial College London} \\ 
\footnotesize
d.holm@ic.ac.uk, ruiao.hu15@imperial.ac.uk, o.street18@imperial.ac.uk 
\\  \small
Keywords: Geometric mechanics; Stochastic parameterisations; \\ \small
Lie group invariant variational principles; Magnetohydrodynamics; shallow water equations
}
\date{\today}

\maketitle

\begin{abstract}
In the present work, we study the geometric structures of the Rotating Shallow Water Magnetohydrodynamics (RSW-MHD) equations through a Lie group invariant Euler--Poincar\'e variational principle. In this geometric framework, we derive new, structure-preserving stochastic RSW-MHD models by introducing stochastic perturbations to the Lie--Poisson structure of the deterministic RSW-MHD equations. The resulting stochastic RSW-MHD equations provide new capabilities for potential application to uncertainty quantification and data assimilation, for example, in space plasma (space weather) and solar physics, particularly in solar tachocline dynamics. 
\end{abstract}

\tableofcontents

\section{Introduction}\label{Intro-sec}

Astrophysical plasma dynamics often takes place in thin domains (i.e., domains of small aspect ratio) such as accretion disks, planetary atmospheres and transition zones in stars. Hence, a variational derivation of the equations and a discussion of the geometric properties of the dynamics of plasmas in thin domains may provide a useful setting for developing new capabilities in astrophysics. 
 
In particular, \citet{gilman2000magnetohydrodynamic} introduced equations for rotating shallow water magnetohydrodynamics (RSW-MHD) as a model of solar tachocline dynamics.  The solar tachocline is a relatively thin layer near the Sun's surface (about 1/20 of the solar radius) comprised of magnetised quasineutral plasma which bridges between the Sun's convective zone and its radiative zone \citep{hughes2007solar}. Since their derivation, the RSW-MHD equations have been investigated from several theoretical and modelling points of view. In particular, the Green--Naghdi approach for including dispersive effects was investigated in \cite{dellar2003dispersive}. Studies of linear and non-linear waves in RSW-MHD have been treated by \cite{schecter2001shallow} and investigations of shear-flow instabilities by \cite{mak2016shear}. The RSW-MHD system has also been shown to be hyperbolic and possess a Hamiltonian structure \citep{de2001hyperbolic, dellar2002hamiltonian, rossmanith2003constrained}. Extensions to multi-layer RSW-MHD equations have also been documented \citep{hunter2015waves, zeitlin2013remarks,alonso2021asymptotic}.


The present paper has two main objectives: 
\begin{enumerate}[(i)]
\item 
Use Lagrangian reduction by Lie symmetry in Hamilton's variational principle to derive and investigate the geometric structure and solution properties of the deterministic RSW-MHD equations; 
\item 
Derive stochastic methods for quantifying uncertainty and assimilating data that preserve the geometric structure and solution properties found in (i) for the deterministic RSW-MHD equations. 
\end{enumerate}

\paragraph{Summary of the paper.} The contributions of the subsequent sections are as follows.
\begin{itemize}
\item 
    Section \ref{Intro-RSW-MHD-sec} introduces the Gilman model of Rotating Shallow Water MHD (RSW-MHD) and considers its Lagrangian and the Hamiltonian formulation.   
    \begin{enumerate}[(i)]
    \item 
    In Section \ref{RSW-sec}, we demonstrate the use of Lagrangian reduction by symmetry in Hamilton's principle for deriving the RSW-MHD equations. 
    \item In Section \ref{Ham-RSW-MHD-sec}, we transform the variational derivation of the RSW-MHD equations on the Lagrangian side to the Hamiltonian formulation and demonstrate its Lie-Poisson structure.   
    \end{enumerate}    

\item 
Section \ref{TRSW-MHD-sec} extends the RSW-MHD model to include thermal effects. This extension results in a Thermal Rotating Shallow Water MHD (TRSW-MHD) model and we consider its Lagrangian and Hamiltonian formulations.

\item Section \ref{Stoch-RSW-MHD} introduces the following two classes of structure-preserving stochasticity into the RSW-MHD model. 
    \begin{enumerate}[(i)]
    \item 
    Stochastic Advection by Lie Transport (SALT) \citep{holm2015variational} which preserves the Casimir invariances of the deterministic model and is presented in Section \ref{sec:SALT}. 
    \item 
    Stochastic Forcing by Lie Transport (SFLT) \citep{HH2021a} which preserves energy of the deterministic model and is presented in Section \ref{sec:SFLT}.
    \end{enumerate}
\item    Section \ref{Conc-Out-sec} summarises the new perspectives and outlook for applying the new stochastic formulations achieved in the present results. 
\end{itemize}

\section{Rotating Shallow Water MHD (RSW-MHD)}\label{Intro-RSW-MHD-sec}
The motion equation for Rotating Shallow Water MHD (RSW-MHD) in a thin Euclidean domain, with bathymetry ${h}(\bx)$ for $\bx\in \mathbb{R}^2$, may be written in Cartesian vector calculus notation in  \cite{gilman2000magnetohydrodynamic} as,
\begin{equation}\label{eqn:RSW-MHD-motion-VC}
    \p_t \bs{u} + \bs{u} \cdot \nabla \bs{u}+ f \bs{u}^\perp = -g \nabla(\eta - {h}(\bx)) + \bB \cdot \nabla \bB \,,
\end{equation}
where $\bu^\perp = (-u_2,u_1)$. The dynamical RSW-MHD variables in \eqref{eqn:RSW-MHD-motion-VC} are the horizontal velocity, $\bu=(u_1,u_2)$, horizontal magnetic field, $\bB=(B_1,B_2)$, and layer thickness, $\eta$. The parameters for Coriolis force and gravitational acceleration are, respectively, $2 \boldsymbol{\Omega}$ and $g$. The term $\bB \cdot \nabla \bB$ in equation \eqref{eqn:RSW-MHD-motion-VC} is the two-dimensional expression of the three-dimensional $J \times B$ force with current density $ J=:{\rm curl\,}B $. A schematic of the RSW-MHD system is shown in Figure \ref{fig:schematic}.
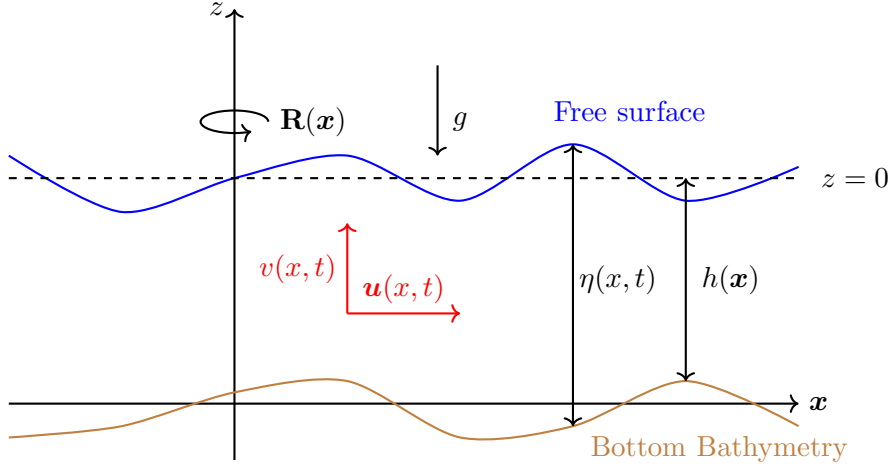
\begin{figure}
\centering
    \begin{tikzpicture}[scale=1.5, thick]

    \draw[->] (-2, 0) -- (5, 0) node[right] {$\bx$};
    \draw[->] (0, -0.5) -- (0, 3.5) node[left] {$z$};
    \draw[brown, thick] plot [smooth] coordinates {(-2,-0.3) (-1,-0.2) (0, 0.1) (1, 0.2) (2,-0.3) (3,-0.2) (4,0.2) (5, -0.2)};
    \node[brown] at (4.3, -0.4) {Bottom Bathymetry};
    
    \draw[blue, thick] plot [smooth] coordinates {(-2,2.2) (-1,1.7) (0,2.0) (1,2.2) (2,1.8) (3,2.3) (4,1.8) (5,2.1)};
    \node[blue] at (3.5,2.6) {Free surface};
    
    \draw[dashed] (-2,2) -- (5,2);
    \node at (5.5 ,2) {$z = 0$};
    
    \draw[<->] (3,-0.2) -- (3,2.3);
    \node at (3.4,1.1) {$\eta(x,t)$};
    
    \draw[<->] (4,0.2) -- (4,2.0);
    \node at (4.4,1.1) {$h(\bx)$};
    
    \draw[->,red,thick] (1, 0.8) -- (2, 0.8) node[midway, above] {$\bu(x,t)$};
    \draw[->,red,thick] (1, 0.8) -- (1, 1.6) node[midway, left] {$v(x,t)$};
    
    \node at (0.7,2.5) {$\bR(\bx)$};
    \draw[->,black, thick] (0.3,2.5) arc[start angle=0, end angle=300, x radius=0.3, y radius = 0.1];

    \node at (2,2.5) {$g$};
    \draw[->,black, thick] (1.8,3.0) -- (1.8, 2.2);
\end{tikzpicture}
    \caption{Schematic of the notation for RSW-MHD.}
    \label{fig:schematic}
\end{figure}

The non-dimensional form of the momentum equation \eqref{eqn:RSW-MHD-motion-VC} can be found by scaling with the following units. Let $L$ be the characteristic horizontal length scale, $H$ be the typical depth, $U$ be the characteristic velocity scale and $B_0$ the characteristic value of the magnetic field. Two different Rossby numbers that measure the relative strength of rotation can be found as $\mathrm{Ro} = U/(fL)$ and $\mathrm{Ro}_m = B_0/(fL)$, which are the `ordinary' Rossby number and the `magnetic' Rossby number, respectively \citep{zeitlin2013remarks, LZ2022}. Additionally, the gravitational effects are characteristic by the Froude number $\mathrm{Fr}^2 = U^2/(gH)$. Using these non-dimensional numbers, the non-dimensional form of \eqref{eqn:RSW-MHD-motion-VC} becomes
\begin{align}
    \p_t \bs{u} + \bs{u} \cdot \nabla \bs{u} + \frac{1}{\mathrm{Ro}}f\bs{u}^\perp = -\frac{1}{\mathrm{Fr}^2} \nabla(\eta - {h}(\bx)) + \frac{1}{\mu^2}\bB \cdot \nabla \bB \,. \label{eqn:RSW-MHD-motion-VC-nondim}
\end{align}
Here, $\mu := \mathrm{Ro}/\mathrm{Ro}_m$ is the ratio of the fluid and magnetic Rossby numbers. For balance between rotational and gravitational effects, one requires $\mathrm{Ro}/\mathrm{Fr}^2 = \mcal{O}(1)$.

Two advection relations hold as auxiliary equations for the RSW-MHD motion \eqref{eqn:RSW-MHD-motion-VC},
\begin{align}
    \p_t\eta +\nabla \cdot(\eta \bs{u}) &= 0
    \label{eqn:advection-eta-VC}
    \,,\\
    \p_t\bB + \bs{u} \cdot \nabla \bB -\bB \cdot \nabla \bs{u}&=0
    \label{eqn:advection-B-VC}
    \,.
\end{align}
Together, these auxiliary equations imply preservation of the condition
\begin{equation}\label{eqn:divergence-condition}
    \nabla \cdot(\eta \bB)=0 \,,
\end{equation}
which can therefore be regarded as a non-dynamical constraint on the initial values.

\subsection{A variational principle for the  RSW-MHD equations}\label{RSW-sec}
The advection equations \eqref{eqn:advection-eta-VC} and \eqref{eqn:advection-B-VC} may each be written in a coordinate free geometric form as
\begin{align}
    (\p_t + \mathcal{L}_u)(\eta\,d^2x) &= 0
    \label{eqn:advetion-eta}
    \,,\\
    (\p_t + \mathcal{L}_u)B &= 0
    \label{eqn:advection-B}
    \,,
\end{align}
where $\eta\,d^2x \in \Omega^2(M)$ is a volume form and $u,B \in \mathfrak{X}(M)$ are vector fields which may be expressed in the Cartesian coordinate system as $u = \bu\cdot\nabla$ and $B = \bB\cdot\nabla$. In this notation, the condition \eqref{eqn:divergence-condition} may be expressed as
\begin{equation}
    \mathcal{L}_B(\eta d^2x) = 0 \,.
\end{equation}
\begin{remark}\label{rmk:divergence-condition}
    By using the advection equations in their geometric form \eqref{eqn:advetion-eta} and \eqref{eqn:advection-B}, one may prove advection of the divergence $\operatorname{div}(\eta \bB)\,d^2x = \mathcal{L}_B(\eta\,d^2x)$. Indeed, Cartan's formula for the Lie derivative gives
    \begin{equation*}
        \mathcal{L}_{B}(\eta\,d^2x) = d\left( B \intprod (\eta\,d^2x) \right) + B \intprod d(\eta\,d^2x) = d\left( B \intprod (\eta\,d^2x) \right) \,,
    \end{equation*}
    since the exterior derivative of a volume form is zero. The exterior derivative commutes with both the time derivative and the Lie derivative. Consequently, the advection of $B$ and $\eta\,d^2x$ implies that
    \begin{equation*}
        (\p_t + \mathcal{L}_u)\mathcal{L}_{B}(\eta\,d^2x) = 0 \,.
    \end{equation*}
    Therefore, if the quantity $\operatorname{div}(\eta \bB)$ vanishes initially, then it will remain zero for all time.
\end{remark}
As shown by \citet{HMR1998}, these advected quantities break the symmetry of the Lagrangian in Hamilton's principle under the full diffeomorphism group which thereby leads to Lie-Poisson equations on the semidirect product between the co-algebra $\mathfrak{g}^* = \mathfrak{X}^*(M)$ and the space of advected quantities $ V^* = \mathfrak{X}(M)\oplus\Omega^2(M)$. To construct the semidirect product space $\mathfrak{g} \ltimes V$, we need a representation of the Lie group $G$, the diffeomorphism group $\textrm{Diff}(M)$, on $V$. The representation of the diffeomorphisms on $\Omega^2(M)$ is by pullback, and on $\mathfrak{X}(M)$ by the Lie group adjoint representation, $\mathrm{Ad}$. The corresponding infinitesimal action of $\mathfrak{X}(M)$ on $\Omega^2(M)$ is by Lie derivative and its action on $\mathfrak{X}(M)$ is by adjoint representation, $\mathrm{ad}$. The dynamical equations arising from Hamilton's principle with broken symmetry emerge as Euler-Poincar\'e equations with advected quantities for a Lagrangian $\ell:\mathfrak{X}(M)\times\Omega^2(M)\times\mathfrak{X}(M) \rightarrow \mathbb{R}$.
Indeed, the RSW-MHD equations \eqref{eqn:RSW-MHD-motion-VC-nondim}-\eqref{eqn:advection-B-VC} are the Euler-Poincar\'e equations corresponding to Hamilton's principle applied to the following action integral in Cartesian coordinates
\begin{equation}\label{eqn:RSW-MHD-action}
\begin{aligned}
    S &= \int_{0}^{T} \ell(u, \eta, B) \,dt 
    \\
    &= \int_{0}^{T} \int_{M}\left(\frac{1}{2}|\bs{u}|^{2}
    +\frac{1}{\mathrm{Ro}} \bs{u} \cdot \boldsymbol{R}(\boldsymbol{x})
    -\frac{1}{2\mu^2}|\bB|^{2}
    -\frac{1}{2 \mathrm{Fr}^{2}}(\eta-2 {h}(\boldsymbol{x}))\right) \eta  \,d^2x \, dt
    \,,
\end{aligned}
\end{equation}
where $M \in \mathbb{R}^{2}$ denotes the horizontal cross-section.
\begin{remark}[The coordinate free action]
    When the manifold is equipped with a suitable metric, the vector fields $u \in \mathfrak{X}(M)$ and $B \in \mathfrak{X}(M)$ possess associated $1$-forms, $u^{\flat} \in \Omega^1(M)$ and $B^{\flat} \in \Omega^1(M)$. In Euclidean domains this association is natural, since if $u = \bu\cdot\nabla$ then $u^\flat = \bu\cdot d\bx$ where $d\bx$ is the dual basis to $\nabla$. In this natural association, the coefficients of the basis elements do not change when using the `musical isomorphism' to transform between vector fields and $1$-forms. As such, the terms in the action can be simply expressed in a coordinate free exterior calculus notation as
    \begin{equation}
        |\bu|^2 = u \intprod u^\flat \,,\quad |\bB|^2 = B \intprod B^{\flat} \,,\quad\hbox{and}\quad \bu\cdot\bs{R} = u \intprod R \,,
    \end{equation}
    where $\intprod$ denotes the insertion of a vector field into a $1$-form and $R$ is a $1$-form such that $R = \bs{R}(\bx)\cdot d\bx$. This identification also assists in understanding the corresponding geometric spaces of the variational derivatives.
\end{remark}

\begin{remark}[The form of the Lagrangian]
    The Lagrangian in \eqref{eqn:RSW-MHD-action} is the standard Lagrangian for the rotating shallow water equations, augmented with a contribution from the energy density of the $\bB$-field. This is the construction applied by \cite{HMR1998} for an Euler equation coupled to a magnetic field.
\end{remark}

\begin{remark}[Weakly versus strongly magnetized regimes]
    The parameter $\mu$ controls the strength of the magnetic forces relative to the gravitational forces and the Coriolis force. When $\mathrm{Ro}/\mathrm{Fr}^2 = \mcal{O}(1)$, the dynamics are in geostrophic balance. Additionally, when $\mathrm{Ro}/\mu^2 = \mcal{O}(1)$, the dynamics are in a strongly magnetized regime where the balance with the Coriolis force involves both the hydrostatic pressure gradient force and the magnetic $J \times B$ force. In the weakly magnetized regime when $\mathrm{Ro}/\mu^2 \ll 1$ the balance with the Coriolis force involves only the the hydrostatic pressure gradient force.
\end{remark}

The RSW-MHD equations \eqref{eqn:RSW-MHD-motion-VC}-\eqref{eqn:advection-B-VC} will be derived by first evaluating the variational derivatives for the Lagrangian in the action integral \eqref{eqn:RSW-MHD-action}. Namely,

\begin{equation}\label{eqn:RSW-MHD-vars}
\begin{aligned}
\frac{1}{\eta} \frac{\delta \ell}{\delta u} & =\left(\bs{u}+\frac{1}{\mathrm{Ro}} \boldsymbol{R}(\boldsymbol{x})\right) \cdot \mathbf{d} \boldsymbol{x}=: \boldsymbol{V}(\boldsymbol{x}, t) \cdot \mathrm{d} \boldsymbol{x}=: V^{\flat} \in \Omega^1(M) 
\,,\\
\frac{\delta \ell}{\delta \eta} & =\left(\frac{1}{2}|\bs{u}|^{2}+\frac{1}{\mathrm{Ro}} \bs{u} \cdot \boldsymbol{R}(\boldsymbol{x})-\frac{1}{2\mu^2}|\bB|^{2}-\frac{1}{\mathrm{Fr}^{2}}(\eta-{h}(\boldsymbol{x}))\right)=: \beta(\boldsymbol{x}, t) \in \Omega^0(M) 
\,,\\
\frac{\delta \ell}{\delta B} & = -\frac{1}{\mu^2}\bB \cdot \mathrm{d} \boldsymbol{x} \otimes \eta d^{2} x=: \frac{1}{\mu^2}B^{\flat} \otimes \eta\, d^{2}x \in \mathfrak{X}^*(M)
\,,
\end{aligned}
\end{equation}
where $\mathfrak{X}^*(M)$ denotes the dual space to the space of vector fields, which contains $1$-form densities. In order to streamline the notation, on the left hand sides of the above equations we have denoted the volume form by $\eta$, suppressing its 2-form basis. The variations with Lin constraints required for the Euler-Poincar\'e equations are then given by,
\begin{equation}\label{eqn:EP-vars}
\begin{aligned}
    \delta u  &=\partial_{t} \xi- \operatorname{ad}_{u} \xi 
    \,,\\
    \delta \eta \,d^2x &=-\mathcal{L}_{\xi} (\eta\,d^2x) =-\operatorname{div}(\eta \boldsymbol{\xi}) d^{2} x 
    \,,\\
    \delta B  &= \ad_{\xi}B = -[\xi,B] =-\mathcal{L}_{\xi} B =(-\boldsymbol{\xi} \cdot \nabla \bB+\bB \cdot \nabla \boldsymbol{\xi}) \cdot \nabla
    \,,
\end{aligned}
\end{equation}
for an arbitrary vector field $\xi = \bs{\xi}\cdot\nabla \in \mathfrak{X}(M)$, where $[\cdot ,\cdot]:\mathfrak{X}\times\mathfrak{X}\rightarrow\mathbb{R}$ denotes the Lie bracket of vector fields. The vector field $u\in\mathfrak{X}(M)$ corresponds to a curve $\phi_t\in{\rm Diff M}$ such that $u=\dot{\phi}\phi^{-1}$, where tangent lifted right translation is denoted by concatenation. Furthermore, the advected quantities $\eta d^2x$ and $B$ evolve by right action of this curve in ${\rm Diff}(M)$. The above constrained variations are those induced on the variables in $\mathfrak{X}\ltimes (\mathfrak{X}\otimes\Omega^2)$ by varying the path $\phi_t$ in diffeomorphism group arbitrarily such that the variations are fixed at the endpoints. Within the variational principle, we will often have need to integrate these terms by parts. The following `diamond' notation is introduced to denote this process
\begin{equation}\label{eqn:diamond-def}
    \scp{\lambda\diamond a}{\xi}_{\mathfrak{X}^*\times\mathfrak{X}} = - \scp{\lambda}{\mathcal{L}_{\xi}a}_{V\times V^*} \,,
\end{equation}
for an advected quantity $a \in V^*$ and a vector field $\xi\in\mathfrak{X}$. For the advected quantities described above, the diamond terms may be computed as
\begin{equation}\label{eqn:diamond_B_and_eta}
    \lambda_B\diamond B = - \ad^*_B\lambda_B  \,,\quad\hbox{and}\quad \lambda_\eta \diamond (\eta\,d^2x) = d\lambda_{\eta} \otimes \eta \, d^2x \,,
\end{equation}
where $\lambda_B \in \mathfrak{X}^*(M)$ and $\lambda_{\eta} \in \Omega^0(M)$. To compute these diamond terms, we have integrated by parts, made use of the form of the Lie derivative of a volume form, and exploited the antisymmetry of the adjoint representation as a map $\ad_{\square}\square : \mathfrak{X}\times\mathfrak{X}\rightarrow\mathfrak{X}$. Furthermore, we have introduced the coadjoint representation of vector fields on the space of $1$-form densities, $\ad^*_{\square}\square:\mathfrak{X}\times\mathfrak{X}^*\rightarrow\mathfrak{X}^*$, which is the dual operator of $\ad$ with respect to the duality pairing defined by $L^2$ spatial integration.

The Euler-Poincar\'e motion equation follows by considering Hamilton's Principle with the variations \eqref{eqn:EP-vars} as
\begin{equation*}
\begin{aligned}
    0 = \delta S &= \int_{0}^{T}\scp{\frac{\delta \ell}{\delta u}}{ \partial_{t} \xi-\operatorname{ad}_{u} \xi }+ \scp{\frac{\delta \ell}{\delta \eta}}{-\mathcal{L}_{\xi} (\eta\,d^2x)} + \scp{\frac{\delta \ell}{\delta B}}{\ad_{\xi} B} \,dt
    \\
    &= \int_{0}^{T}\scp{-\left(\p_t+\ad^*_u\right) \frac{\delta \ell}{\delta u} +  \frac{\delta\ell}{\delta\eta}\diamond (\eta\,d^2x) + \frac{\delta\ell}{\delta B}\diamond B}{\xi} \,dt
    \\
    &= \int_{0}^{T} \scp{-\left(\p_t+\ad^*_u\right) \frac{\delta \ell}{\delta u} +  d\frac{\delta\ell}{\delta\eta}\otimes \eta\,d^2x - \ad^*_B\frac{\delta\ell}{\delta B}}{\xi} \,dt\,.
\end{aligned}
\end{equation*}
Having isolated the arbitrary vector field variation $\xi$, the fundamental lemma of the calculus of variations then yields the Euler-Poincar\'e motion 
equation as%
\footnote{In the motion equation \eqref{eqn:EPmot} we have applied the advection equation \eqref{eqn:advetion-eta} to factor out the volume form $\eta\,d^2x$.} 
\begin{equation}
    \left(\p_t+\mcal{L}_u\right) \frac{1}{\eta}\frac{\delta \ell}{\delta u} = d\frac{\delta\ell}{\delta\eta} - \frac{1}{\eta}\ad^*_B\frac{\delta\ell}{\delta B} \,,
\label{eqn:EPmot}
\end{equation}
to be considered together with the auxiliary equations for $\eta$ in  \eqref{eqn:advetion-eta} and $B$ in \eqref{eqn:advection-B}.
In preparation for writing the final form of the RSW-MHD equations, notice that the form of the variational derivative with respect to $B$ and the condition \eqref{eqn:divergence-condition} imply that
\begin{equation}\label{eq:RSWMHD-coord-free-motion}
    -\frac{1}{\eta}\ad^*_B\frac{\delta\ell}{\delta B} = \frac{1}{\eta\mu^2}\left( (\mcal{L}_B B^\flat)\otimes \eta d^2x + B^\flat \otimes \mcal{L}_B(\eta d^2x) \right) = \frac{1}{\mu^2}\mcal{L}_B B^\flat \,,
\end{equation}
since, on the Lie algebra of vector fields, $\ad^*_B \square$ is a Lie derivative with respect to $B\in\mathfrak{X}(M)$. Inserting the form of the remaining variational derivatives in \eqref{eqn:RSW-MHD-vars} into the Euler-Poincar\'e equation \eqref{eqn:EPmot} yields
\begin{equation}
    \left(\p_t+\mcal{L}_u\right) u^\flat + \frac{1}{\mathrm{Ro}}\mcal{L}_u R = d\left( \frac{1}{2}|\bu|^2 - \frac{1}{2\mu^2}|\bB|^2 + \frac{1}{\mathrm{Ro}}\bu\cdot \bs{R} - \frac{1}{\mathrm{Fr}^2}(\eta - h(\bx)) \right) + \frac{1}{\mu^2}\mcal{L}_B B^\flat \,.
\label{eqn: RSWMHD-motion}
\end{equation}
On a two dimensional domain, in vector calculus notation, the Lie derivative of a $1$-form, $R = \bR\cdot d\bx$, with respect to a vector field, $u = \bu\cdot\nabla$, has the following two equivalent forms
\begin{equation}\label{eq:LieXform}
    \mathcal{L}_u R = (\bu\cdot\nabla\bs{R} + R_j\nabla u^j)\cdot d\bx = ((\nabla^{\perp}\cdot\bs{R})\bu^\perp + \nabla(\bu\cdot \bs{R}))\cdot d\bx \,,
\end{equation}
where $(u_1,u_2)^\perp = (-u_2,u_1)$ and there is an implicit sum over the components of the vectors $\bu,\bs{R}$. In vector calculus form, the Euler-Poincar\'e equation \eqref{eqn: RSWMHD-motion} may therefore be written as
\begin{equation}
    \p_t\bu + \bu\cdot\nabla\bu + \frac{1}{\mathrm{Ro}}(\nabla^{\perp}\cdot\bs{R})\bu^\perp  = - \frac{1}{\mathrm{Fr}^2}\nabla(\eta - h(\bx)) 
    + \frac{1}{\mu^2}\bB\cdot\nabla\bB \,. \label{eq: SWMHD B}
\end{equation}
In two dimensions, $\bs{R}$ is a vector potential for the Coriolis parameter $f$, and these are related to each other by $2\bs{R} = f\bx^\perp$. Therefore, $\nabla^\perp\cdot\bs{R} = f$ and the Euler-Poincar\'e motion equation in \eqref{eqn: RSWMHD-motion} is equivalent to \eqref{eqn:RSW-MHD-motion-VC-nondim}.

\begin{remark}
For a closed material loop of fluid, $c(u)$, moving with the flow of $u$, the Kelvin-Noether theorem for the rotating shallow water MHD equations may be written  as
\begin{equation}
\begin{aligned}
    \frac{d}{dt} \oint_{c(u)} \left( u^\flat + \frac{1}{\mathrm{Ro}}R \right) &=\oint_{c(u)}\left(\p_t+\mathcal{L}_u\right) \left( u^\flat + \frac{1}{\mathrm{Ro}}R \right) = \oint_{c(u)}\left(d\frac{\delta\ell}{\delta\eta} -\frac{1}{\mu^2}\mcal{L}_B B^\flat \right) = -\oint_{c(u)} \frac{1}{\mu^2}\mathcal{L}_B B^\flat  
    \\
    &=\oint_{c(u)} \frac{1}{\mu^2}(\bB \cdot \nabla \bB )\cdot d\bx = \oint_{c(u)} \frac{1}{\mu^2}((\nabla^\perp \cdot \bB)\bB^\perp)\cdot d\bx
    \,,
\end{aligned}
\end{equation}
where from the second to the first line we have converted to a vector calculus notation and the final equality is due to the fact that $\bB\cdot\nabla\bB$ and $(\nabla^\perp \cdot \bB)\bB^\perp$ differ by a gradient. The $B$-field therefore generates circulation unless it is a potential flow. This is in contrast to the three dimensional magnetic fluids, such as Euler's equation coupled to a magnetic field, for which the $B$-field generates circulation unless $\bB$ and its curl are colinear. The analogue between these two cases follows from the fact that, for $\bB_3 = (\bB,0)$ and $\nabla_3 = (\nabla,\p_z)$, we have that $\nabla_3\times\bB_3 = (\nabla^\perp\cdot\bB)\wh{z}$. Hence, $(\nabla^\perp\cdot\bB)\bB^\perp = - \bB_3 \times (\nabla_3\times\bB_3)$. Notice that in the two dimensional case, whereby the dynamics is occurring on an embedded plane within the three dimensional domain, it is not possible for $\bB_3$ and its curl to be co-linear.
\end{remark}
\paragraph{Stream function version of Lagrangian}
As demonstrated in Remark \ref{rmk:divergence-condition}, the expression $\operatorname{div}(\eta \bB)=0$ is advected by the RSW-MHD fluid flow. Thus, one may define a stream function $\psi \in \Omega^0(M)$,
\begin{equation}
    \eta \bB = \nabla^\perp \psi \,, \label{eq:psi def}
\end{equation}
which is advected by the flow $u$ as a scalar quantity. That is,  
\[
    0 = \p_t \psi + \mcal{L}_u \psi = \p_t \psi + \bs{u}\cdot \nabla \psi
\,.\]

Thus, we may rewrite the action \eqref{eqn:RSW-MHD-action} in terms of the equivalent Lagrangian $\wt{\ell} = \wt{\ell}(u,\eta, \psi)$ to have,
\begin{align}
\begin{split}
    S &= \int_0^T \wt{\ell}(u,\eta, \psi)\,dt \\
    &= \int_0^T \int_M \left(\frac{1}{2}|\bs{u}|^{2}+\frac{1}{\mathrm{Ro}} \bs{u} \cdot \boldsymbol{R}(\boldsymbol{x})
    -\frac{1}{2\mu^2}\left|\frac{1}{\eta}\nabla^\perp\psi\right|^{2}
    -\frac{1}{2 \mathrm{Fr}^{2}}(\eta-2 {h}(\boldsymbol{x}))\right) \eta\,d^2x\,, 
\end{split}
    \label{eq:RSW-MHD-action 2}
\end{align}
together with modified constrained Euler-Poincar\'e variations \eqref{eqn:EP-vars} to include the variation of $\psi$,
\begin{align*}
\delta u  &=\partial_{t} \xi- \operatorname{ad}_{u} \xi 
= \big(\partial_{t}  \bs{\xi} + \bs{u}\cdot \nabla  \bs{\xi}  
- \bs{\xi}\cdot \nabla \bs{u}\big)\cdot \nabla
    \,,\\
    \delta \eta \,d^2x &=-\mathcal{L}_{\xi} (\eta\,d^2x) =-\operatorname{div}(\eta \boldsymbol{\xi}) d^{2} x 
    \,,\\
    \delta \psi &= -\mcal{L}_\xi \psi = -\bs{\xi}\cdot \nabla \psi \,,
\end{align*}
for an arbitrary vector field $\xi = \bs{\xi}\cdot \nabla \in \mathfrak{X}(M)$. The Euler-Poincar\'e equation derived from the stationary condition of \eqref{eq:RSW-MHD-action 2} is
\begin{align}
    \left(\p_t+\mcal{L}_u\right) \frac{1}{\eta}\frac{\delta \wt\ell}{\delta u} = d\frac{\delta \wt\ell}{\delta\eta} - \frac{1}{\eta}\frac{\delta \wt\ell}{\delta \psi} d \psi\,,
\label{eqn: psi var}
\end{align}
which are to be considered with the advection equation for $\eta d^2x$ and $\psi$. The variational derivatives of $\wt{\ell}$ in \eqref{eqn: psi var} are given by
\begin{align}
\begin{split}
    &\frac{\delta \wt{\ell}}{\delta u} = \frac{\delta \ell}{\delta u} \in \Omega^1(M)\otimes \Omega^2(M)\,,\quad \frac{\delta \wt{\ell}}{\delta \psi} = \frac{1}{\mu^2}\nabla^\perp\cdot \left(\frac{1}{\eta} \nabla^\perp \psi\right)\otimes d^2x \, \in  \Omega^2(M) \,, \\
    &\frac{\delta \wt{\ell}}{\delta \eta} = \left(\frac{1}{2}|\bs{u}|^{2}+\frac{1}{\mathrm{Ro}} \bs{u} \cdot \boldsymbol{R}(\boldsymbol{x}) + \frac{1}{2\mu^2}\left|\frac{1}{\eta} \nabla^\perp \psi\right|^{2}-\frac{1}{\mathrm{Fr}^{2}}(\eta-{h}(\boldsymbol{x}))\right) \in \Omega^0(M)\,,
\end{split}
\end{align}
Upon inserting these variational derivatives of $\wt{\ell}$ into the Euler-Poincar\'e motion equation, one finds
\begin{equation}
\begin{aligned}
    \left(\p_t+\mcal{L}_u\right) u^\flat + \frac{1}{\mathrm{Ro}}\mcal{L}_u R &= -\frac{1}{\mu^2\eta}\nabla^\perp\cdot\left(\frac{1}{\eta}\nabla^\perp\psi\right)d\psi
    \\
    &\qquad + d\left(\frac{|\bu|^2}{2} + \frac{1}{\mathrm{Ro}}\bu\cdot\bR + \frac{|\nabla^\perp\psi|^2}{2\eta^2\mu^2} - \frac{1}{\mathrm{Fr}^2}(\eta - h(\bx)) \right) \,.
\end{aligned}
\label{eq:RSWMHD-motion}
\end{equation}
As the terms involving the shallow water variables $u,R,\eta$ are the same as before, the momentum equation in vector calculus notation reads as
\begin{equation}
    \p_t\bu + \bu\cdot\nabla\bu + \frac{1}{\mathrm{Ro}}f\bu^\perp = -\frac{1}{\mathrm{Fr}^2}\nabla(\eta - h(\bx)) - \frac{1}{\eta\mu^2}\nabla^\perp\cdot\left(\frac{1}{\eta}\nabla^\perp\psi\right)\nabla\psi + \frac{1}{\mu^2}\nabla\left( \frac{|\nabla^\perp\psi|^2}{2\eta^2} \right) \,.
\label{eqn: psi mot}
\end{equation}
Recalling the definition of $\psi$ in \eqref{eq:psi def}, one has that $\nabla\psi = - \eta\bB^\perp$, and hence the final two terms may be written as
\begin{equation}
    (\nabla^\perp\cdot\bB)\bB^\perp + \nabla\left(\frac{|\bB|^2}{2} \right) = \bB\cdot\nabla\bB \,.
\end{equation}
Consequently, the momentum equation in \eqref{eqn: psi mot} is equivalent to both \eqref{eqn:RSW-MHD-motion-VC} and the Euler-Poincar\'e equation \eqref{eq: SWMHD B}.

\subsection{Hamiltonian structure of RSW-MHD equations}\label{Ham-RSW-MHD-sec}
In this section, the Hamiltonian Lie-Poisson form of the RSW-MHD equations is derived via a Legendre transformation. The Lie-Poisson structure of these equations is interesting (and may be helpful) in studying a variety of astrophysical phenomena, including the dynamics of gravity waves and Alfv\'en waves \citep{zeitlin2024lagrangian,petrosyan2020shallow} in plasmas such as the solar tachocline \citep{gilman2000magnetohydrodynamic,miesch2005large}. 

The Legendre transformation of the Lagrangian corresponding to the action \eqref{eqn:RSW-MHD-action} yields the Hamiltonian in terms of the magnetic vector field, $B=\bB \cdot \nabla$, the columnar volume, $\eta\, d^2x$, and a momentum variable defined as
\begin{equation}\label{eqn:moment map}
    m = \bs{m}\cdot d\bs{x}\otimes d^2x =: \frac{\delta\ell}{\delta u} = u^\flat\otimes \eta\,d^2x + \frac{1}{\mathrm{Ro}}R\otimes \eta\,d^2x \,.
\end{equation}
Explicitly, the Hamiltonian is
\begin{equation}\label{RSWMHD-Ham}
    H(m,\eta,B) = \scp{m}{u} - \ell(u,\eta,B) = \int \frac{1}{2\eta}\left|\bs{m} - \frac{\eta\bs{R}}{\mathrm{Ro}}\right|^2 + \frac{\eta}{2\mu^2}|\bs{B}|^2 + \frac{1}{2\mathrm{Fr}^2}(\eta-2h(\bs{x}))\eta \,d^2x \,. 
\end{equation}
It should be remarked that the momentum variable, $m$, is a $1$-form density and, together with $R\otimes \eta\,d^2x$, appears in the first term of the Hamiltonian. This has been written in a vector form, in terms of $\bs{m}$ and $\bR$, and a function, $\eta$, for clarity. Formally, this term represents the following
\begin{equation*}
    \frac12 \left( \frac{1}{\eta\,d^2x}\left( m - \frac{1}{\mathrm{Ro}}R\otimes \eta\,d^2x \right) \right)^\sharp \intprod \left(\frac{1}{\eta\,d^2x}\left( m - \frac{1}{\mathrm{Ro}}R\otimes \eta\,d^2x \right)  \right) \eta\,d^2x = \frac12(u \intprod u^\flat ) \eta\,d^2x \,.
\end{equation*}
Varying this term in $m$ produces the vector field $u$. Variations of the Hamiltonian are therefore obtained as
\begin{equation}
    \frac{\delta H}{\delta m} = u \,,\quad \frac{\delta H}{\delta B} = \frac{1}{\mu^2} B^\flat \otimes \eta\,d^2x \,,\quad\hbox{and}\quad \frac{\delta H}{\delta \eta} = -\frac{1}{2}|\bu|^2 - \frac{1}{\mathrm{Ro}}\bu\cdot\bR + \frac{1}{2\mu^2}|\bB|^2 + \frac{1}{\mathrm{Fr}^2}(\eta - h(\bx)) \,.
\end{equation}
The semidirect-product Lie-Poisson equations are then given in standard form by
\begin{equation}
\partial_{t}\left(\begin{array}{c}
m \\
B \\
\eta\,d^2x
\end{array}\right) =-\left[\begin{array}{ccc}
\operatorname{ad}_{\square}^{*} m & -\operatorname{ad}_{B}^{*} \square & \eta \,d(\square) \otimes d^2x \\
-\operatorname{ad}_{\square} B & 0 & 0 \\
\mathcal{L}_{\square} (\eta\,d^2x) & 0 & 0
\end{array}\right]\left(\begin{array}{c}
\delta H / \delta m \\
\delta H / \delta B \\
\delta H / \delta \eta
\end{array}\right) \,.\label{eqn:RSWMHD PB}
\end{equation}
The matrix operator here denotes the coadjoint action of the whole semidirect product algebra $\mathfrak{X}(M)\ltimes (\Omega^0(M)\oplus\mathfrak{X}^*(M))$ on its dual, and corresponds to the Lie-Poisson bracket associated with the standard Lie bracket on the space $\mathfrak{X}(M)\ltimes (\Omega^0(M)\oplus\mathfrak{X}^*(M))$. A short calculation verifies that these equations are equivalent to those obtained from the action \eqref{eqn:RSW-MHD-action}.

\paragraph{A stream function for the magnetic field.}
We may write the Hamiltonian in terms of the function $\psi$, rather than the vector field $B$. This Hamiltonian is
\begin{equation}
    H(m,\eta,B) = \wt{H}(m,\eta,\psi) = \int \frac{1}{2\eta}\left|\bs{m} - \frac{\eta\bs{R}}{\mathrm{Ro}}\right|^2 + \frac{1}{2\eta\mu^2}|\nabla^\perp \psi|^2 + \frac{1}{2\mathrm{Fr}^2}(\eta-2h(\bs{x}))\eta \,d^2x \,.
\end{equation}
The variations are computed as
\begin{equation}\label{eqn:variational-derivatives-psi}
    \frac{\delta \wt{H}}{\delta m} = u \,,\quad \frac{\delta \wt{H}}{\delta \psi} = -\frac{1}{\mu^2}\nabla^\perp\cdot\left(\frac{1}{\eta}\nabla^\perp\psi\right)\,d^2x \,,\quad\hbox{and}\quad \frac{\delta \wt{H}}{\delta \eta} = -\frac{1}{2}|\bu|^2 - \frac{1}{\mathrm{Ro}}\bu\cdot\bR - \frac{|\nabla^\perp\psi|^2}{2\eta^2\mu^2} + \frac{1}{\mathrm{Fr}^2}(\eta - h(\bx)) \,.
\end{equation}
Since $\eta$ and $B$ are advected quantities, it follows that $\psi \in \Omega^0(M)$ must be an advected $0$-form (function). The Lie derivative of a function $\psi$ with respect to a vector field $u=\bu\cdot\nabla\in\mathfrak{X}$ takes the form $\mathcal{L}_{u}\psi = \bu\cdot\nabla\psi$. Hence, for functions, the diamond operator defined in equation \eqref{eqn:diamond-def} is computed as
\begin{equation}\label{eqn:diamond-psi}
    \lambda_{\psi}\diamond \psi = -\lambda_{\psi}d\psi \,,
\end{equation}
where $\lambda_{\psi}\in\Omega^2(M)$ is a volume form. We define the following semidirect product Lie-Poisson system
\begin{equation}
    \p_t \left(\begin{array}{c}m \\ \psi \\ \eta\,d^2x\end{array}\right)=-\left[\begin{array}{ccc}\operatorname{ad}_{\square}^{*} m & \square \diamond \psi & \square \diamond (\eta\,d^2x) \\ \mathcal{L}_{\square} \psi & 0 & 0 \\ \mathcal{L}_{\square} (\eta d^2x) & 0 & 0\end{array}\right]\left(\begin{array}{c} \delta\wt{H} / \delta m \\ \delta\wt{H} / \delta \psi \\ \delta\wt{H} / \delta \eta \end{array}\right) \,. \label{eq:psi LP bracket}
\end{equation}
This equation is the Lie-Poisson equation corresponding to the Lie-Poisson bracket associated with the standard semidirect product Lie bracket on $\mathfrak{X}(M)\ltimes (\Omega^2(M)\oplus\Omega^0(M))$.
\begin{proposition}\label{prop:RSWMHS casimirs}
    The shallow water MHD equations possesses the following Casimir invariances,
    \begin{align}\label{eq:RSWMHD casimirs}
        C(q, \psi, \eta) = \int \eta \phi(\psi) + \eta q \varphi(\psi) d^2x
    \end{align}
    for arbitrary smooth functions $\phi$ and $\varphi$. The quantity $q$ denotes  
    the potential vorticity (PV) associated with the flow, which is defined as
    \begin{align}\label{eq:qv def}
    q = \frac{1}{\eta}\big(\nabla^\perp\cdot \bu - \frac{f}{Ro}\big)
    \,.
    \end{align}
\end{proposition}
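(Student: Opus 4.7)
The plan is to verify that $C$ is a Casimir of the Lie--Poisson bracket \eqref{eq:psi LP bracket} by showing $dC/dt=0$ along the flow, a computation which also explains why the Casimir is linear rather than general polynomial in $q$. The argument uses three ingredients already in place: (i) $\psi$ is scalar-advected, $(\p_t+\bu\cdot\nabla)\psi=0$; (ii) the volume form $\eta\,d^2x$ is advected, $(\p_t+\mathcal{L}_\bu)(\eta\,d^2x)=0$; and (iii) an evolution equation for the potential vorticity $q$ derived by taking the two-dimensional curl of the momentum equation. The decisive point is that the magnetic source in the PV equation integrates to zero against functions of $\psi$ alone, thanks to $\eta\bB=\nabla^\perp\psi$ and the orthogonality $\nabla\psi\cdot\nabla^\perp\psi=0$; this forces the $q$-dependence in the Casimir to be linear.

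The first step is to derive the PV equation. Taking $\nabla^\perp\cdot$ of the momentum equation \eqref{eqn:RSW-MHD-motion-VC-nondim} kills the pressure and gradient terms since $\nabla^\perp\cdot\nabla=0$, while the two-dimensional identities $\nabla^\perp\cdot(\bu\cdot\nabla\bu)=\bu\cdot\nabla(\nabla^\perp\cdot\bu)+(\nabla^\perp\cdot\bu)(\nabla\cdot\bu)$ and $\nabla^\perp\cdot\bu^\perp=\nabla\cdot\bu$ combined with \eqref{eqn:advection-eta-VC} yield (with the conventional sign $\eta q=\nabla^\perp\cdot\bu+f/\mathrm{Ro}$)
\begin{equation*}
    \eta(\p_t+\bu\cdot\nabla)q \;=\; \frac{1}{\mu^2}\nabla^\perp\cdot(\bB\cdot\nabla\bB).
\end{equation*}

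Next I differentiate $C$. For the piece $\int\eta\phi(\psi)\,d^2x$, the standard scalar-advection argument gives zero contribution. For $\int\eta q\,\varphi(\psi)\,d^2x$, inserting the evolution equations for $\omega_a:=\eta q$ and $\psi$, and integrating by parts once on the divergence term $\nabla\cdot(\omega_a\bu)$, one finds that the advective contributions $\int\omega_a\bu\cdot\nabla\varphi(\psi)\,d^2x$ and $-\int\omega_a\varphi'(\psi)\bu\cdot\nabla\psi\,d^2x$ exactly cancel, leaving
\begin{equation*}
    \frac{dC}{dt} \;=\; \frac{1}{\mu^2}\int\varphi(\psi)\,\nabla^\perp\cdot(\bB\cdot\nabla\bB)\,d^2x.
\end{equation*}

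The main obstacle is showing that this residual integral vanishes; this is where $\eta\bB=\nabla^\perp\psi$ does the essential work. Integration by parts converts it into $-\int\varphi'(\psi)\,\nabla^\perp\psi\cdot(\bB\cdot\nabla\bB)\,d^2x$; substituting $\nabla^\perp\psi=\eta\bB$ and using $B_jB_i\p_jB_i=\tfrac12 B_j\p_j|\bB|^2$ rewrites this as $-\tfrac12\int\varphi'(\psi)\,\nabla^\perp\psi\cdot\nabla|\bB|^2\,d^2x$. A second integration by parts yields $\tfrac12\int\nabla\cdot\bigl(\varphi'(\psi)\nabla^\perp\psi\bigr)|\bB|^2\,d^2x$, and expanding the divergence gives $\varphi''(\psi)\,\nabla\psi\cdot\nabla^\perp\psi+\varphi'(\psi)\,\nabla\cdot\nabla^\perp\psi$, both summands vanishing identically. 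The cancellation works only because the factor multiplying the source depends on $\psi$ alone, which singles out the linear-in-$q$ family \eqref{eq:RSWMHD casimirs} as the natural Casimirs. The same mechanism extends to arbitrary Hamiltonians in the Lie--Poisson bracket, where the source term in the PV equation takes the Jacobian form $\det(\nabla A,\nabla\psi)$ with $A$ built from $\delta H/\delta\psi$; the identical integration-by-parts argument then establishes that $C$ is a genuine Casimir and not merely a first integral of this specific Hamiltonian.
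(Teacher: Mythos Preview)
Your argument is correct, but it proceeds quite differently from the paper's. The paper does not derive a PV equation or integrate by parts; instead, it changes variables from $(m,\psi,\eta)$ to $(u,\psi,\eta)$ via the Jacobian $J$ of the map $\bu=\bm/\eta-\bR/\mathrm{Ro}$, computes the transformed Poisson operator $\mathcal{J}_{PV}=J\circ\mathcal{J}_{LP}\circ J^T$ explicitly, and recognises the result as the potential vorticity bracket of the thermal rotating shallow water (TRSW) equations. The Casimirs are then read off by citation to the TRSW literature.

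Your route is a direct time-derivative computation: you derive the PV evolution with its magnetic source $\mu^{-2}\nabla^\perp\!\cdot(\bB\cdot\nabla\bB)$, and then show that this source pairs to zero against any $\varphi(\psi)$ via the chain of identities $\nabla^\perp\psi=\eta\bB$, $\bB\cdot(\bB\cdot\nabla\bB)=\tfrac12\bB\cdot\nabla|\bB|^2$, and $\nabla\cdot(\varphi'(\psi)\nabla^\perp\psi)=0$. This has the virtue of being self-contained and of explaining \emph{why} the dependence on $q$ must be linear (a general $\Phi(q,\psi)$ would leave a residual $\int\partial_q\Phi$ times the magnetic source that fails to vanish). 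The paper's bracket-transformation approach, by contrast, establishes the Casimir property for all Hamiltonians in one stroke and simultaneously identifies the Poisson structure with one already known from \cite{dellar2002hamiltonian}, but it outsources the actual kernel computation to the cited references. Your final paragraph, noting that for a general $H$ the PV source takes the Jacobian form $\det(\nabla A,\nabla\psi)$ and that the same integration-by-parts argument applies, correctly closes the gap between ``first integral'' and ``Casimir''; the sketch is right, though spelling out that for general $H$ the transport velocity becomes $\delta H/\delta m$ (so that $\psi$, $\eta$, and $q$ are all advected by the same field and the advective cancellations still go through) would tighten it. One minor point: you silently flip the sign in the PV definition relative to the statement (writing $\eta q=\nabla^\perp\!\cdot\bu+f/\mathrm{Ro}$ rather than the minus in \eqref{eq:qv def}); this is harmless for the Casimir claim but worth flagging.
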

\begin{proof}
    One can transform the Lie-Poisson bracket \eqref{eq:psi LP bracket} to the potential vorticity bracket for the thermal rotating shallow water (TRSW) equation. As in \cite{holm1989lyapunov}, we use the following transformation $(m, \psi, \eta) \rightarrow (u, \psi, \eta)$, whose Jacobian is given by
    \begin{align*}
        J = \begin{pmatrix}
            \frac{1}{\eta} & 0 & -\frac{m}{\eta^2} \\
            0 & 1 & 0 \\
            0 & 0 & 1 
        \end{pmatrix}\,, \qquad 
        J^T = \begin{pmatrix}
            \frac{1}{\eta} & 0 & 0 \\
            0 & 1 & 0 \\
            -\frac{m}{\eta^2} & 0 & 1 
        \end{pmatrix}\,. \qquad 
    \end{align*}
    Let $\mcal{J}_{LP}$ denote the Lie-Poisson bracket appearing in \eqref{eq:psi LP bracket}. In coordinate notation, the potential vorticity bracket $\mcal{J}_{PV}$ may be calculated, as follows, 
    \begin{align*}
        \mcal{J}_{PV} &= J \circ \mcal{J}_{LP} \circ J^T \\
        & = \begin{pmatrix}
            \frac{1}{\eta} & 0 & -\frac{m_i}{\eta^2} \\
            0 & 1 & 0 \\
            0 & 0 & 1 
        \end{pmatrix}
        \begin{pmatrix}
            m_j\p_i\square + \p_j(m_i\square) & -\p_i \psi & \eta\p_i \square \\
            \p_j \psi & 0 &0 \\
            \p_j(\eta \square ) & 0 & 0 
        \end{pmatrix}
        \begin{pmatrix}
            \frac{1}{\eta} & 0 & 0 \\
            0 & 1 & 0 \\
            -\frac{m_j}{\eta^2} & 0 & 1 
        \end{pmatrix}
        \\
        & = \begin{pmatrix}
            \frac{1}{\eta}\p_j(u_i + R_i/Ro) - \frac{1}{\eta}\p_i(u_j + R_j/Ro) & \frac{1}{\eta}\p_i\psi & \p_i \\
            -\frac{1}{\eta}\p_j\psi & 0 & 0 \\
            \p_j & 0 & 0 
        \end{pmatrix} = \begin{pmatrix}
            q\times & \frac{1}{\eta}\p_i\psi & \p_i \\
            -\frac{1}{\eta}\p_j\psi & 0 & 0 \\
            \p_j & 0 & 0 
        \end{pmatrix} \,,
    \end{align*}
    where in the last equality the two dimensional expressions obtain the potential vorticity bracket for the TRSW equations. From the standard literature on the Casimirs for TRSW equations, e.g., \cite{Zeitlin2018, HLP2021}, one obtains the required results.
\end{proof}
\begin{remark}[Equvalence of Poisson brackets]
    Noting that $B = {\eta}^{-1} \nabla^\perp \psi$, we have 
    \begin{align*}
        J_{PV} = \begin{pmatrix}
            0 & -q & -B_2 & \p_1 \\
            q & 0 & B_1 & \p_2 \\
            B_2 & -B_1 & 0 & 0 \\
            \p_1 & \p_2 & 0 & 0 
        \end{pmatrix}
    \end{align*}
    which is the the same Poisson bracket that appears in \cite{dellar2002hamiltonian}. 
\end{remark}

\section{Thermal effects in shallow water magnetohydrodynamics}\label{TRSW-MHD-sec}

The Thermal Rotating Shallow Water MHD (TRSW--MHD) model is an extension of the RSW--MHD model to include horizontal gradients of buoyancy \citep{dellar2003common}. The TRSW-MHD equations modify the RSW--MHD equations to include the variable buoyancy $b(\bx,t) = \rho(\bx,t))/\bar{\rho}$, where  $\rho$ is the (time and space dependent) mass density and $\bar{\rho}$ is the uniform reference mass density. The non-dimensional TRSW--MHD equation for the fluid velocity $\bu$ is given by
\begin{align}
    \p_t\bu + \bu\cdot\nabla\bu + \frac{1}{\mathrm{Ro}}(\nabla^{\perp}\cdot\bs{R})\bu^\perp  = - \frac{1}{\mathrm{Fr}^2}(1+\mathfrak{s}b)\nabla(\eta - h(\bx)) - \frac{\mathfrak{s}}{2\mathrm{Fr}^2}\eta\nabla b + \frac{1}{\mu^2}\bB\cdot\nabla\bB \,, \label{eq:TRSWMHD-VC}    
\end{align}
where the non-dimensional number $\mathfrak{s}$ is the stratification parameter. Three advection relations now hold as auxiliary equations for the TRSW--MHD velocity equation \eqref{eq:TRSWMHD-VC}. Namely, 
\begin{align}
    \p_t\eta +\nabla \cdot(\eta \bs{u}) &= 0
    \label{eqn:advection-eta-TVC}
    \,,\\
    \p_t\bB + \bs{u} \cdot \nabla \bB 
    - \bB \cdot \nabla \bs{u}&=0
    \label{eqn:advection-B-TVC}
     \,,\\
    \p_t b + \bs{u} \cdot \nabla b &= 0
    \label{eqn:advection-gamma-TVC}
     \,.
\end{align}
The additional auxiliary equation for the positive scalar function $b$ still preserves the divergence condition \eqref{eqn:divergence-condition}, $\nabla \cdot(\eta \bB)=0$, which therefore can still be regarded as a non-dynamical constraint on the initial values of $\eta$ and $\bB$ which does not involve the scalar advected quantity $b$. 

\paragraph{Lagrangian formulation.} 
The TRSW--MHD equations \eqref{eq:TRSWMHD-VC}--\eqref{eqn:advection-gamma-TVC} can be derived as Euler-Poincar\'e equations with advected quantities by considering the following action functional
\begin{equation}\label{eqn:TRSW-MHD-action}
\begin{aligned}
    S &= \int_{0}^{T} \ell(u, \eta, B,b) \,dt 
    \\
    &= \int_{0}^{T} \int_{M}\left(\frac{1}{2}|\bs{u}|^{2}+\frac{1}{\mathrm{Ro}} \bs{u} \cdot \boldsymbol{R}(\boldsymbol{x})
    -\frac{1}{2\mu^2}|\bB|^{2}\right) \eta 
    -\frac{1}{2\mathrm{Fr}^{2}}(1+\mathfrak{s}b)\left(\eta^2- 2\eta{h}(\boldsymbol{x})\right) \,d^2x \, dt
    \,.
\end{aligned}
\end{equation}
Compared to the RSW--MHD Lagrangian \eqref{eqn:RSW-MHD-action}, the TRSW--MHD Lagrangian \eqref{eqn:TRSW-MHD-action} is the same except the factor $1/\mathrm{Fr}^2$ is replaced by $(1+\mathfrak{s}b)/\mathrm{Fr}^2$. The variational derivatives of the action integral \eqref{eqn:TRSW-MHD-action} are evaluated as
\begin{align*}
    \begin{split}
        \frac{1}{\eta} \frac{\delta \ell}{\delta u} & =\left(\bs{u}+\frac{1}{\mathrm{Ro}} \boldsymbol{R}(\boldsymbol{x})\right) \cdot \mathbf{d} \boldsymbol{x}=: \boldsymbol{V}(\boldsymbol{x}, t) \cdot \mathrm{d} \boldsymbol{x}=: V^{\flat} \in \Omega^1(M) 
        \,,\\
        \frac{\delta \ell}{\delta \eta} & =\left(\frac{1}{2}|\bs{u}|^{2}+\frac{1}{\mathrm{Ro}} \bs{u} \cdot \boldsymbol{R}(\boldsymbol{x})-\frac{1}{2\mu^2}|\bB|^{2}-\frac{1}{\mathrm{Fr}^{2}}(1+\mathfrak{s}b)(\eta-{h}(\boldsymbol{x}))\right)=: \beta(\boldsymbol{x}, t) \in \Omega^0(M) 
        \,,\\
        \frac{\delta \ell}{\delta B} & = \frac{1}{\mu^2}\bB \cdot \mathrm{d} \boldsymbol{x} \otimes \eta d^{2} x=: \frac{1}{\mu^2}B^{\flat} \otimes \eta\, d^{2}x \in \mathfrak{X}^*(M)
        \,,\\
        \frac{\delta \ell}{\delta b} & = -\left(\frac{\mathfrak{s}}{2\mathrm{Fr}^2}\eta(\eta - 2h(\bx))\right)\otimes d^2x =: -\gamma(\boldsymbol{x}, t)\otimes d^2x \in \Omega^2(M) 
        \,.
    \end{split}
\end{align*}
The Euler-Poincar\'e motion equation follows by considering $\delta S = 0$ with the constrained variations \eqref{eqn:EP-vars} as well as $\delta b = -\mathcal{L}_{\xi} b$ where $\xi \in \mathfrak{X}(M)$ is the same arbitrary vector field appearing in \eqref{eqn:EP-vars}. The form of the constrained variations follow from the advection equations \eqref{eqn:advection-eta-TVC}, \eqref{eqn:advection-B-TVC} and \eqref{eqn:advection-gamma-TVC}, which can each be written in a coordinate-free geometric form as
\begin{align}
    (\p_t + \mathcal{L}_u)(\eta\,d^2x) &= 0
    \label{eqn:advection-eta-TA}
    \,,\\
    (\p_t + \mathcal{L}_u)B &= 0
    \label{eqn:advection-TB}
     \,,\\
    (\p_t + \mathcal{L}_u)b &= 0
    \label{eqn:advection-TC}
   \,,
\end{align}
and as before, the condition \eqref{eqn:divergence-condition} may be expressed as $\mathcal{L}_B(\eta d^2x) = 0$. Inserting the constrained variations into Hamilton's principle for the TRSW-MHD action yields
\begin{equation*}
\begin{aligned}
    0 = \delta S &= \int_{0}^{T}\scp{\frac{\delta \ell}{\delta u}}{ \partial_{t} \xi-\operatorname{ad}_{u} \xi }+ \scp{\frac{\delta \ell}{\delta \eta}}{-\mathcal{L}_{\xi} (\eta\,d^2x)} + \scp{\frac{\delta \ell}{\delta B}}{\ad_{\xi} B} + \scp{\frac{\delta \ell}{\delta b}}{-\mcal{L}_{\xi} b} \,dt
    \\
    &= \int_{0}^{T}\scp{-\left(\p_t+\ad^*_u\right) \frac{\delta \ell}{\delta u} +  \frac{\delta\ell}{\delta\eta}\diamond (\eta\,d^2x) + \frac{\delta\ell}{\delta B}\diamond B
        + \frac{\delta\ell}{\delta b}\diamond b }{\xi}
 \,dt
    \\
    &= \int_{0}^{T} \scp{-\left(\p_t+\ad^*_u\right) \frac{\delta \ell}{\delta u} 
    +  d\frac{\delta\ell}{\delta\eta}\otimes \eta\,d^2x 
    - \ad^*_B\frac{\delta\ell}{\delta B}
    - \frac{\delta\ell}{\delta b} d b}{\xi} \,dt\,.
\end{aligned}
\end{equation*}
The fundamental lemma of calculus of variations then yields the Euler-Poincar\'e equation of motion
\begin{equation}\label{TRSW-EPmotion}
    \left(\p_t+\mcal{L}_u\right) \frac{1}{\eta}\frac{\delta \ell}{\delta u} = d\frac{\delta\ell}{\delta\eta} - \frac{1}{\eta}\ad^*_B\frac{\delta\ell}{\delta B} 
    - \frac{1}{\eta}\frac{\delta\ell}{\delta b}  d b \,,
\end{equation}
which is to be considered along with the auxiliary advection equations \eqref{eqn:advection-eta-TA}, \eqref{eqn:advection-TB} and \eqref{eqn:advection-TC}, where we have divided the motion equation through by the volume form $\eta\,d^2x$. Through a similar calculation as for the RSW-MHD case above, one may show the equivalence between the vector calculus version of the TRSW-MHD momentum equation \eqref{eq:TRSWMHD-VC} with the Euler-Poincar\'e equation \eqref{TRSW-EPmotion}.

\begin{remark}
For a closed material loop of fluid, $c(u)$, moving with the flow of $u$, the Kelvin-Noether theorem for the TRSW-MHD equations \eqref{TRSW-EPmotion} is
\begin{equation}\label{TRSW-KNthm}
    \frac{d}{dt} \oint_{c(u)} \left( u^\flat + \frac{1}{\mathrm{Ro}}R^\flat \right) 
    =\oint_{c(u)} \frac{1}{\mu^2}(\bB \cdot \nabla \bB )\cdot d\bx 
    +  \oint_{c(u)} \frac{\mathfrak{s}}{2\mathrm{Fr}^2}\left(\eta - 2{h}(\bx)\right) \nabla b \cdot d\bx \,.
\end{equation}
The TRSW-MHD Kelvin--Noether theorem implies that thermal gradients (i.e., $\nabla b$) and magnetic forces can create circulation. Moreover, with the appropriate scaling of $\mu$ and $\mathrm{Fr}$, the  magnetic and gravitational forces can play independent roles on equal footing and both be involved in the balance with the Coriolis force.
\end{remark}


\paragraph{Hamiltonian formulation.} 
The Hamiltonian formulation of TRSW-MHD may be obtained in a similar way to the Hamiltonian formulation of RSW-MHD system in Section \ref{Ham-RSW-MHD-sec}. Through a Legendre transform, the TRSW-MHD Hamiltonian is calculated from the Lagrangian as
\begin{equation}
\label{TRSW-MHD-Ham}
\begin{split}
    \mathfrak{h}(m,\eta,B,b) &= \scp{m}{u} - \ell(u,\eta,B,b) \\ 
    &= \int \frac{1}{2\eta}\left|\bs{m} - \frac{\eta\bs{R}}{\mathrm{Ro}}\right|^2 
    + \frac{\eta}{2\mu^2}|\bs{B}|^2 
    + \frac{1+\mathfrak{s}b}{2\mathrm{Fr}^2}\left(\eta^2- 2\eta {h}(\boldsymbol{x})\right) \,d^2x \,, 
\end{split}    
\end{equation}
where $m = \frac{\delta \ell}{\delta u} = \bs{m}\cdot d\bs{x}\otimes d^2x = \left(u^\flat + R^\flat/\mathrm{Ro}\right)\otimes \eta \, d^2x$ is the same momentum map as in the RSW-MHD case \eqref{eqn:moment map}. The variational derivatives of $\mathfrak{h}$ can found as
\begin{align}
    \begin{split}
        &\frac{\delta \mathfrak{h}}{\delta m} = u \,,\quad \frac{\delta \mathfrak{h}}{\delta B} = \frac{1}{\mu^2} B^\flat \otimes \eta\,d^2x \,, \quad \frac{\delta \mathfrak{h}}{\delta b} = \frac{\mathfrak{s}}{2\mathrm{Fr}^2}\left(\eta^2 - 2\eta h(\bx)\right)\,,\\
        &\quad\hbox{and}\quad \frac{\delta \mathfrak{h}}{\delta \eta} = -\frac{1}{2}|\bu|^2 - \frac{1}{\mathrm{Ro}}\bu\cdot\bR + \frac{1}{2\mu^2}|\bB|^2 + \frac{1+\mathfrak{s}b}{\mathrm{Fr}^2}(\eta - h(\bx)) \,,
    \end{split}
\end{align}
such that the TRSW-MHD system \eqref{eqn:advection-eta-TA}-\eqref{TRSW-EPmotion} can be assembled into the following Lie-Poisson Hamiltonian system,
\begin{equation}
    \p_t \left(\begin{array}{c}m \\ B \\ \eta\,d^2x \\ b \end{array}\right)
    =-\left[\begin{array}{cccc}
    \operatorname{ad}_{\square}^{*} m & -\ad^*_B \square & \square \diamond (\eta\,d^2x)  & \square \diamond  b 
    \\ -\ad_\square B & 0 & 0 & 0 
    \\ \mathcal{L}_{\square} (\eta d^2x) & 0 & 0 & 0 
    \\ \mathcal{L}_{\square} b & 0 & 0 & 0     
    \end{array}\right]
    \left(\begin{array}{c} \delta\mathfrak{h} / \delta m 
    \\ \delta\mathfrak{h} / \delta B \\ \delta\mathfrak{h} / \delta \eta \\  \delta\mathfrak{h} / \delta b
    \end{array}\right) \,. \label{eq:TRSW-MHD LP B bracket}
\end{equation}
The matrix in equation \eqref{eq:TRSW-MHD LP B bracket} is the coadjoint action of $\mathfrak{X}(M)\ltimes (\mathfrak{X}^*(M)\oplus\Omega^0(M) \oplus\Omega^2(M))$ on its dual. 
\begin{proposition}
    The TRSW-MHD equations \eqref{eqn:advection-eta-TA}-\eqref{TRSW-EPmotion} possess the following Casimir invariances,
    \begin{align}
        C(q, \psi, \eta,b) = \int \eta \big(\phi(\psi) + \Phi(b) \big) 
        + \eta q \big(\varphi(\psi) + \Gamma(b)\big)d^2x\,.
    \end{align}
    for arbitrary smooth functions $\phi$, $\varphi$, $\Phi$ and $\Gamma$, where 
    the potential vorticity (PV) associated with the flow is defined by equation \eqref{eq:qv def}
\end{proposition}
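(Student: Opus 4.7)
The plan is to extend the strategy of Proposition \ref{prop:RSWMHS casimirs} by incorporating the additional advected scalar $b$ on the same footing as $\psi$. First, I would rewrite the Lie--Poisson bracket \eqref{eq:TRSW-MHD LP B bracket} in terms of $\psi$ in place of $B$ using $B = \eta^{-1}\nabla^\perp\psi$, exactly as was done between brackets \eqref{eqn:RSWMHD PB} and \eqref{eq:psi LP bracket} for the RSW--MHD case. Since $b$ is advected as a $0$-form just like $\psi$, it enters the resulting bracket on $\mathfrak{X}^*(M)\ltimes(\Omega^0(M)\oplus\Omega^2(M)\oplus\Omega^0(M))$ through the block
\begin{equation*}
    \begin{pmatrix}
        \operatorname{ad}^*_\square m & \square\diamond \psi & \square\diamond(\eta\,d^2x) & \square\diamond b \\
        \mathcal{L}_\square \psi & 0 & 0 & 0 \\
        \mathcal{L}_\square(\eta\,d^2x) & 0 & 0 & 0 \\
        \mathcal{L}_\square b & 0 & 0 & 0
    \end{pmatrix},
\end{equation*}
with zero cross-entries between the scalar slots $\psi$, $\eta\,d^2x$, and $b$, as dictated by the semidirect product structure.

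Next, I would apply the change of variables $(m,\psi,\eta,b)\to(u,\psi,\eta,b)$ using the Jacobian from the proof of Proposition \ref{prop:RSWMHS casimirs}, augmented by an identity block in the $b$-slot. Conjugating the Lie--Poisson bracket by this Jacobian yields the TRSW--MHD potential vorticity bracket $\mathcal{J}_{PV}$. Its upper-left $3\times 3$ block reproduces the RSW--MHD PV bracket of Proposition \ref{prop:RSWMHS casimirs}, while its fourth row and column take the same form as the $\psi$-row and column with $\psi$ replaced by $b$. In this way $\psi$ and $b$ appear symmetrically and independently in $\mathcal{J}_{PV}$.

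Finally, identifying the kernel of $\mathcal{J}_{PV}$ recovers the $\psi$-type Casimirs of Proposition \ref{prop:RSWMHS casimirs} together with a second family of $b$-type Casimirs obtained by the $\psi \leftrightarrow b$ symmetry, which superpose additively into the claimed expression. Concretely, for $C$ of the stated form one finds $\delta C/\delta q = \eta(\varphi(\psi)+\Gamma(b))$, $\delta C/\delta \eta = \phi(\psi)+\Phi(b)$, $\delta C/\delta \psi = \eta(\phi'(\psi)+q\varphi'(\psi))$, and $\delta C/\delta b = \eta(\Phi'(b)+q\Gamma'(b))$, and the vector row of $\mathcal{J}_{PV}\delta C$ reduces to a pure gradient using $\delta C/\delta u = -\nabla^\perp(\varphi(\psi)+\Gamma(b))$ together with the chain-rule identities $\nabla\psi\cdot\nabla^\perp\varphi(\psi)=0$ and $\nabla b\cdot\nabla^\perp\Gamma(b)=0$. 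The main obstacle is the bookkeeping of the cross terms $\nabla\psi\cdot\nabla^\perp b$ appearing in the scalar rows of $\mathcal{J}_{PV}$; since the PV bracket for TRSW--MHD takes the structural form of a Lie--Poisson bracket with two independent advected $0$-forms, this bookkeeping and the classification of the kernel ultimately reduce to the advected-scalar Casimir results for the TRSW system as documented, for instance, in \cite{Zeitlin2018,HLP2021}.
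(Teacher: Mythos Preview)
Your proposal is correct and follows precisely the approach the paper indicates: the paper's entire proof is the single sentence ``The proofs of these two propositions for the TRSW-MHD equations run parallel to the corresponding proofs for the RSW-MHD equations,'' and your write-up supplies exactly that parallel argument---passing to the $\psi$-formulation, augmenting the Jacobian of Proposition~\ref{prop:RSWMHS casimirs} by an identity block in the $b$-slot, and reading off the Casimirs from the resulting PV bracket with two advected $0$-forms. If anything, you have given considerably more detail than the paper does.
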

\begin{proof}
The proofs of these two propositions for the TRSW-MHD equations run parallel to the corresponding proofs for the RSW-MHD equations.
\end{proof}

\begin{remark}[TRSW-MHD with magnetic stream functions]
    As with the RSW-MHD system, the TRSW-MHD equations can be written using the magnetic stream function $\psi$. In this case, we have a Hamiltonian given by
    \begin{align*}
        \wt{\mathfrak{h}}(m,\eta,\psi,b) = \int \frac{1}{2\eta}\left|\bs{m} - \frac{\eta\bs{R}}{\mathrm{Ro}}\right|^2 + \frac{1}{2\eta\mu^2}|\nabla^\perp \psi|^2 + \frac{1+\mathfrak{s}b}{2\mathrm{Fr}^2}\left(\eta^2- 2\eta {h}(\boldsymbol{x})\right) \,d^2x\,, 
    \end{align*}
    together with the Lie-Poisson system
    \begin{equation*}
    \p_t \left(\begin{array}{c}m \\ \psi \\ \eta\,d^2x \\ b \end{array}\right)
    =-\left[\begin{array}{cccc}
    \operatorname{ad}_{\square}^{*} m & \square \diamond \psi & \square \diamond (\eta\,d^2x)  & \square \diamond  b 
    \\ \mathcal{L}_{\square} \psi & 0 & 0 & 0 
    \\ \mathcal{L}_{\square} (\eta d^2x) & 0 & 0 & 0 
    \\ \mathcal{L}_{\square} b & 0 & 0 & 0     
    \end{array}\right]
    \left(\begin{array}{c} \delta\wt{\mathfrak{h}} / \delta m 
    \\ \delta\wt{\mathfrak{h}} / \delta \psi \\ \delta\wt{\mathfrak{h}} / \delta \eta \\  \delta\wt{\mathfrak{h}} / \delta b
    \end{array}\right) \,, \label{eq:TRSW-MHD LP bracket}
\end{equation*}
where the matrix in equation \eqref{eq:TRSW-MHD LP bracket} defines the Lie-Poisson bracket corresponding to the standard semidirect product Lie-Poisson bracket on the dual of this Lie algebra $\mathfrak{X}(M)\ltimes (\Omega^2(M)\oplus\Omega^0(M) \oplus\Omega^2(M))$. 
\end{remark}



\section{Stochastic rotating shallow water magnetohydrodynamics}\label{Stoch-RSW-MHD}
In this section we consider two complementary approaches for introduucing structure preserving stochastic perturbations to the RSW-MHD equations based on its Lie-Poisson Hamiltonian structure. They are the Stochastic Advection by Lie Transport (SALT) and the Stochastic Forcing by Lie Transport (SFLT) approaches. 
The SALT approach, which introduces stochastic transport noise, as well as stochastic potential energy \citep{holm2015variational, ST2023}, preserves the property of coadjoint orbit motion of the deterministic dynamics as well as the Casimir invariants. The SFLT approach introduces stochastic forcing, as well as stochastic material entrainment effects \citep{HH2021a}, and preserves the energy conservation property of the deterministic dynamics.
For a variational derivation of the SALT and SFLT type perturbations applied to Hall magnetohydrodynamics, see \cite{HHS2024}.

\subsection{Stochastic Advection by Lie Transport (SALT).}\label{sec:SALT}
The general form of the SALT perturbations can be found by considering a family of Hamiltonians defined on the Lie co-algebra $\mathfrak{X}(M)^*\ltimes(\Omega^2(M)\oplus \mathfrak{X}(M))$, denoted by $\{\wt{H}_i(m,\eta,B)\}_{i=1}^N$, as well as a family of i.i.d driving Brownian motions $\{W^i_t\}_{i=1}^N$. 
The Poisson structure \eqref{eqn:RSWMHD PB} of the SALT structure-preserving stochastic RSW-MHD equations is 
\begin{equation}
\begin{aligned}
    \diff \left(\begin{array}{c}m \\ B \\ \eta\,d^2x\end{array}\right)&=-\left[\begin{array}{ccc}
\operatorname{ad}_{\square}^{*} m & -\operatorname{ad}_{B}^{*} \square & \eta \,d(\square) \otimes d^2x \\
-\operatorname{ad}_{\square} B & 0 & 0 \\
\mathcal{L}_{\square} (\eta\,d^2x) & 0 & 0
\end{array}\right]\left(\begin{array}{c}
\delta H / \delta m \\
\delta H / \delta B \\
\delta H / \delta \eta
\end{array}\right) \,\diff t
    \\
    &\qquad - \sum_{i=1}^N \left[\begin{array}{ccc}
\operatorname{ad}_{\square}^{*} m & -\operatorname{ad}_{B}^{*} \square & \eta \,d(\square) \otimes d^2x \\
-\operatorname{ad}_{\square} B & 0 & 0 \\
\mathcal{L}_{\square} (\eta\,d^2x) & 0 & 0
\end{array}\right]\left(\begin{array}{c}
\delta H_i / \delta m \\
\delta H_i / \delta B \\
\delta H_i / \delta \eta
\end{array}\right) \circ \diff W_t^i\,,
\end{aligned}\label{eqn:SALT RSWMHD PB}
\end{equation}
where $\circ$ denotes Fisk-Stratonovich integration. Expanding out the Poisson bracket to individual components of the Lie co-algebra, SALT RSW-MHD advection equations are modified to include stochastic transport, 
\begin{align}
    &\diff\, (\eta\,d^2x) + \mcal{L}_u (\eta\,d^2x) \,\diff t + \mcal{L}_{\frac{\delta H_i}{\delta m}} (\eta\,d^2x) \circ \diff W^i_t = 0\,,\label{eq:SALT RSWMHD LP eta} \\
    &\diff B + \mcal{L}_u B \,\diff t + \mcal{L}_{\frac{\delta H_i}{\delta m}} B \circ \diff W^i_t = 0\,,\label{eq:SALT RSWMHD LP B}
\end{align}
and the stochastic momentum equation is modified as
\begin{align}
    \begin{split}
        \diff m + \ad^*_u m \,\diff t + \ad^*_{\frac{\delta H_i}{\delta m}} m \circ \diff W^i_t &= -\eta d\left(\frac{\delta H}{\delta \eta}\,\diff t + \frac{\delta H_i}{\delta \eta}\circ \diff W^i_t\right)\otimes d^2x \\ 
        & \qquad + \ad^*_B\left(\frac{\delta H}{\delta B}\,\diff t + \frac{\delta H_i}{\delta B}\circ \diff W^i_t\right)\,.
    \end{split}\label{eq:SALT RSWMHD LP m}
\end{align}
Under the stochastic advection of $\eta$ and $B$, the weighted incompressibility condition \eqref{eqn:divergence-condition} of the $B$-field is preserved. That is, when $\nabla\cdot(\eta \bB)\,d^2x = \mcal{L}_B (\eta d^2x)$ is initially zero, it remains so in the subsequent motion. This can be verified via a direct calculation similar to that in Remark \ref{rmk:divergence-condition} to find 
\begin{align*}
    \diff \left(\mcal{L}_B(\eta\,d^2x)\right) + \mcal{L}_u \left(\mcal{L}_B(\eta\,d^2x)\right)\diff t + \mcal{L}_{\frac{\delta H_i}{\delta m}} \left(\mcal{L}_B(\eta\,d^2x)\right)\circ \diff W^i_t =0\,.
\end{align*}

For the concrete choice of the Hamiltonian perturbations ${H}_i(m,\eta, B)$ where ${H}_i$ are linear in $(m,\eta, B)$, i.e., 
\begin{equation}
	H_i(m,\eta, B) = \int_M \bm \cdot \bxi_i + \bB\cdot \boldsymbol{D}_i + \eta \gamma_i \,d^2x\,,\quad i = 1,\ldots, N \,,
\end{equation}
for some prescribed $\xi_i(\bx, t) \in \mathfrak{X}(M)$, $D_i(\bx, t) \in \mathfrak{X}^*(M)$ and $\gamma_i(\bx, t) \in \Omega^0(M)$, the SALT RSW-MHD equations given by equations \eqref{eq:SALT RSWMHD LP eta}-\eqref{eq:SALT RSWMHD LP m} can be expressed component-wise in vector calculus notation as follows 
\begin{align}
	&\diff \eta + \nabla\cdot(\eta\,\bu)\,\diff t + \nabla\cdot(\eta\,\bxi_i) \circ \diff W_t^i = 0 
	\,,\\
	&\diff\bB + \left(\bs{u} \cdot \nabla \bB -\bB \cdot \nabla \bs{u}\right)\,\diff t + \left(\bxi_i \cdot \nabla \bB -\bB \cdot \nabla \bxi_i\right) \circ \diff W_t^i = 0 \,,
\end{align}
and 
\begin{align}
    \begin{split}
        &\diff \bu + (\bu\diff t + \bxi_i\circ \diff W_t^i)\cdot\nabla\bu + \frac{f}{\mathrm{Ro}}(\bu^\perp \,\diff t + \bxi_i^\perp\circ \diff W_t^i) + \left(\frac{1}{\mathrm{Fr}^2}\nabla(\eta - h(\bx)) - \frac{1}{\mu^2}\bB\cdot\nabla\bB \right)\,\diff t \\
        & \qquad =  - \left( u_j\nabla \xi^j_i +\nabla\left(\bxi_i\cdot\frac{\bR}{\mathrm{Ro}}\right) + \nabla \gamma_i - \frac{1}{\eta}\left(\bB\cdot \nabla\boldsymbol{D}_i + (D_i)_j \nabla B^j + \boldsymbol{D}_i \nabla\cdot \bB\right) \right)\circ \diff W_t^i \,.
    \end{split}\label{eq:SALT RSWMHD A}
\end{align}
In \eqref{eq:SALT RSWMHD A}, the multi-index $\xi^j_i$ are the components of the perturbation vector field $\xi_i = \bxi_i\cdot \nabla = \xi^j_i \p_j \in \mathfrak{X}(M)$ and the multi-index $(D_i)_j$ are the components of the perturbation to the magnetic field $D_i = \boldsymbol{D}_i \cdot d\bx\otimes d^2x = (D_i)_j dx^j\otimes d^2x$.

To better mimic the form of the deterministic RSW-MHD equations such that the stochastic perturbations can be attributed as the perturbations to the kinetic energy, gravitational and magnetic potential energies, one can make following particular choice of $\wt{H}_i$ as
\begin{equation}
	H_i(m,\eta, B) = \int_M \frac{\bm}{\eta}\cdot \bxi_i + \eta\bB\cdot \boldsymbol{D}_i + \eta \gamma_i \,d^2x\,,\quad i = 1,\ldots, N \,.
\end{equation}
The variational derivatives of $H_i$ are given by
\begin{align*}
    \begin{split}
        \frac{\delta H_i}{\delta \bm} = \frac{\bxi_i}{\eta} =: \wt{\bxi}_i\,,\quad \frac{\delta H_i}{\delta \bB} = \eta\boldsymbol{D}_i \,,\quad \frac{\delta H_i}{\delta \eta} = -\frac{1}{\eta^2}\bm\cdot\bxi_i +\bB\cdot\boldsymbol{D}_i = -(\bu+\bR)\cdot\wt{\bxi}_i + \bB\cdot\boldsymbol{D}_i\,.
    \end{split}
\end{align*}
In this case, the momentum equation \eqref{eq:SALT RSWMHD LP m} becomes
\begin{align}
    \begin{split}
        &\diff \bu + \bu \cdot\nabla\bu\,\diff t + \frac{f}{\mathrm{Ro}}(\bu^\perp \,\diff t + \wt{\bxi}_i^\perp\circ \diff W_t^i) +  \left(\nabla^\perp\cdot\bu\right)\wt{\bxi}_i^\perp \\
        & \qquad \qquad \qquad =  \left(-\frac{1}{\mathrm{Fr}^2}\nabla(\eta - h(\bx)) + \frac{1}{\mu^2}\bB\cdot\nabla\bB \right)\,\diff t + \left(-\nabla \gamma_i + \left(\nabla^\perp\cdot \boldsymbol{D}_i\right)\bB^\perp \right)\circ \diff W_t^i \,.
    \end{split} \label{eq:SALT RSWMHD VC}
\end{align}
where we have heavily used the two equivalent forms of the Lie derivative in two dimensions, given in equation \eqref{eq:LieXform}, to transform the variables in the Lie derivative between $\bu\cdot\nabla\bs{v}$ and $(\nabla^{\perp}\cdot\bs{v})\bu^\perp$. In particular, the stochastic terms in the equation in this setting take the form
\begin{align*}
	&\mathcal{L}_{\wt{\xi}_i}(u^\flat + \bR\cdot d\bx) -\nabla\left( (\bu+\bR)\cdot\bs{\zeta}_i \right)\cdot d\bx  = \left((\nabla^\perp\cdot \bs{R})\bs{\zeta}_i^\perp + (\nabla^\perp\cdot\bu)\bs{\zeta}_i^\perp \right) \cdot d\bx \,,\\
    &\mathcal{L}_{B}\boldsymbol{D}_i\cdot d\bx - \nabla\left( \bB\cdot\boldsymbol{D}_i \right)\cdot d\bx  = \left(\left(\nabla^\perp\cdot \boldsymbol{D}_i\right)\bB^\perp \right) \cdot d\bx \,,
\end{align*}
and one sees a stochastic contribution to the $J\times B$ force in the last term. 
    
\begin{remark}[Conservation of Casimirs]
    The SALT RSW-MHD equations \eqref{eq:SALT RSWMHD LP eta}-\eqref{eq:SALT RSWMHD LP m} possess the same Casimir invariants as the deterministic RSW-MHD equations given by \eqref{eq:RSWMHD casimirs}. This may be seen by transforming the stochastic Poisson bracket \eqref{eqn:SALT RSWMHD PB} into the $(n,\psi,\eta)$ variables to find
    \begin{equation}
    \begin{aligned}
        \diff \left(\begin{array}{c}m \\ \psi \\ \eta\,d^2x\end{array}\right)&=-\left[\begin{array}{ccc}\operatorname{ad}_{\square}^{*} m & \square \diamond \psi & \square \diamond (\eta\,d^2x) \\ \mathcal{L}_{\square} \psi & 0 & 0 \\ \mathcal{L}_{\square} (\eta d^2x) & 0 & 0\end{array}\right]\left(\begin{array}{c}
    \delta \wt{H} / \delta m \\
    \delta \wt{H} / \delta \psi \\
    \delta \wt{H} / \delta \eta
    \end{array}\right) \,\diff t
        \\
        &\qquad - \sum_{i=1}^N \left[\begin{array}{ccc}\operatorname{ad}_{\square}^{*} m & \square \diamond \psi & \square \diamond (\eta\,d^2x) \\ \mathcal{L}_{\square} \psi & 0 & 0 \\ \mathcal{L}_{\square} (\eta d^2x) & 0 & 0\end{array}\right]\left(\begin{array}{c}
    \delta \wt{H}_i / \delta m \\
    \delta \wt{H}_i / \delta \psi \\
    \delta \wt{H}_i / \delta \eta
    \end{array}\right) \circ \diff W_t^i\,,
    \end{aligned}\label{eqn:SALT RSWMHD PB psi}
    \end{equation}
    where the stochastic Hamiltonians are related by $\wt{H}_i(m,\eta,\psi) = H_i(m, \eta, \eta^{-1}\nabla^\perp\psi)$. As the Poisson structure of the deterministic part and stochastic part of dynamics are the same, the variational derivatives of the Casimirs still lie in the kernel of the deterministic RSW-MHD Poisson structure given by \eqref{eq:RSWMHD casimirs}.
\end{remark}

\begin{remark}
For a closed material loop of fluid, $c(\diff \chi_t)$, moving with the flow generated by the stochastic vector field $\diff \chi_t = u\,\diff t + \wt{\xi}_i\circ \diff W^i_t$, the Kelvin-Noether theorem for the SALT RSW-MHD equations \eqref{eq:SALT RSWMHD VC} is given by
\begin{equation}
\begin{aligned}
    \diff \oint_{c(\diff \chi_t)} \left( u^\flat + \frac{1}{\mathrm{Ro}}R \right) &=\oint_{c(\diff \chi_t)}\left(\diff + \mathcal{L}_{u\,\diff t} + \mathcal{L}_{\xi_i\circ \diff W^i_t}\right) \left( u^\flat + \frac{1}{\mathrm{Ro}}R \right) \\
    &= \oint_{c(\diff \chi_t)}\left(-d\frac{\delta H}{\delta\eta}\,\diff t - d \frac{\delta H_i}{\delta \eta} \circ \diff W^i_t + \frac{1}{\eta}\left(\ad^*_B \frac{\delta H}{\delta B} \,\diff t + \ad^*_B \frac{\delta H_i}{\delta B} \circ \diff W^i_t\right) \right) 
    \\
    &=\oint_{c(\diff \chi_t)} \left(\frac{1}{\mu^2}(\nabla^\perp\cdot \bB)\bB^\perp \,\diff t + \left(\nabla^\perp\cdot \boldsymbol{D}_i\right)\bB^\perp \circ \diff W^i_t\right)\cdot d\bx
    \,,
\end{aligned}
\end{equation}
Thus, one finds that the deterministic and stochastic forcing generating the circulation dynamics takes the same form. Namely, both forces take the form  $J\times B$  where the stochastic perturbation of the current density is prescribed as $\nabla^\perp\cdot \boldsymbol{D}_i$. 
\end{remark}

\subsection{Stochastic Forcing by Lie Transport (SFLT).}\label{sec:SFLT}
We consider the general form of the SFLT perturbations for the RSW-MHD system using the $(m,\psi,\eta)$ variables. This is because care must be taken when introducing stochasticity in the $B$ dynamics to preserve the weighted incompressibility condition $\nabla\cdot(\eta\bB)$. 

The general form of the SFLT perturbations can be defined by a family of forces defined on the Lie co-algebra $\mathfrak{X}^*(M)\ltimes(\Omega^0(M)\oplus \Omega^2(M))$ denoted by
\begin{align}
    f^m_i(\bx,t) = \bs{f}^m_i(\bx,t) \cdot d\bx \in \mathfrak{X}^*(M)\,,\quad f^\psi_i(\bx,t) \in \Omega^0(M)\,,\quad f^\eta_i(\bx,t)\otimes d^2x \in \Omega^2(M)\,,
\end{align}
as well as a family of i.i.d driving Brownian motions $\{W^i_t\}_{i=1}^N$. Here, the forces can be arbitrary functions of the Lie co-algebra variables $(m,\psi, \eta)$ and they can explicitly depend smoothly on both $\bx$ and $t$. 
Upon using the Poisson structure \eqref{eqn:RSWMHD PB}, the SFLT structure-preserving stochastic RSW-MHD equations become 
\begin{equation}
\begin{aligned}
    \diff \left(\begin{array}{c}m \\ \psi \\ \eta\,d^2x\end{array}\right)&=-\left[\begin{array}{ccc}\operatorname{ad}_{\square}^{*} m & \square \diamond \psi & \square \diamond (\eta\,d^2x) \\ \mathcal{L}_{\square} \psi & 0 & 0 \\ \mathcal{L}_{\square} (\eta d^2x) & 0 & 0\end{array}\right]\left(\begin{array}{c} \delta\wt{H} / \delta m \\ \delta\wt{H} / \delta \psi \\ \delta\wt{H} / \delta \eta \end{array}\right)\,\diff t
    \\
    &\qquad - \sum_{i=1}^N \left[\begin{array}{ccc}\operatorname{ad}_{\square}^{*} f^m_i & \square \diamond f^\psi_i & \square \diamond (\eta\,d^2x) \\ \mathcal{L}_{\square} f^\psi_i & 0 & 0 \\ \mathcal{L}_{\square} (\eta d^2x) & 0 & 0\end{array}\right]\left(\begin{array}{c} \delta\wt{H} / \delta m \\ \delta\wt{H} / \delta \psi \\ \delta\wt{H} / \delta \eta \end{array}\right) \circ \diff W_t^i\,.
\end{aligned}\label{eqn:SFLT RSWMHD PB}
\end{equation}
Expanding out the Poisson bracket into individual components of the Lie co-algebra, the SFLT RSW-MHD equations comprise the following advection equations under stochastic forcing,  
\begin{align}
    &\diff\, (\eta\,d^2x) + \mcal{L}_u (\eta\,d^2x) \,\diff t + \mcal{L}_{u} (f^\eta_i\,d^2x) \circ \diff W^i_t = 0\,,\label{eq:SFLT RSWMHD LP eta} \\
    &\diff \psi + \mcal{L}_u \psi \,\diff t + \mcal{L}_{u} f^\psi_i \circ \diff W^i_t = 0\,,\label{eq:SFLT RSWMHD LP B}
\end{align}
as well as the stochastic momentum equation
\begin{align}
    \begin{split}
        \diff m + \ad^*_u m \,\diff t + \ad^*_u f^m_i \circ \diff W^i_t &= -\eta d\frac{\delta  \wt{H}}{\delta \eta}\otimes d^2x \,\diff t - f^\eta_i d \frac{\delta \wt{H}}{\delta \eta}\otimes d^2x \circ \diff W^i_t \\ 
        & \qquad + \frac{\delta \wt{H}}{\delta \psi}d \psi \,\diff t +\frac{\delta \wt{H}}{\delta \psi}d f^\psi_i \circ \diff W^i_t\,.
    \end{split}\label{eq:SFLT RSWMHD LP m}
\end{align}
In applications, as $\eta$ represents the depth of the fluid and is related to the columnar volume of the flow above a given area element, the perturbations $f^\eta_i$ often are set to zero to preserve the advection property of columnar volume due to its relation to the determinant of the fluid back-to-labels map. In this case, the SFLT RSW-MHD equations given by equations \eqref{eq:SFLT RSWMHD LP eta}-\eqref{eq:SFLT RSWMHD LP m} may be expressed component-wise in vector calculus notation as follows 
\begin{align}
	&\diff \eta + \nabla\cdot(\eta\,\bu)\,\diff t = 0 
	\,,\\
	&\diff\psi + \bs{u} \cdot \nabla \psi \,\diff t + \bs{u} \cdot \nabla f^\psi_i \circ \diff W_t^i = 0
	\,,
\end{align}
and 
\begin{align}
    \begin{split}
        &\diff \bu + \left(\bu \cdot\nabla\bu + \frac{f}{\mathrm{Ro}}\bu^\perp + \frac{1}{\mathrm{Fr}^2}\nabla(\eta - h(\bx)) - \frac{1}{\mu^2}\bB\cdot\nabla\bB \right)\,\diff t \\
        & \qquad = - \frac{1}{\eta}\left(\left(\nabla^\perp\cdot \bs{f}^m_i\right)\bu^\perp + \nabla\left(\bu\cdot\bs{f}^m_i \right) + \bs{f}^m_i\nabla\cdot\bu\right)\circ \diff W^i_t + \frac{1}{\mu^2}\left(\nabla^\perp\cdot\bB\right)\boldsymbol{F}_i^\perp \circ \diff W_t^i \,,
    \end{split}\label{eq:SFLT RSWMHD VC}
\end{align}
where we have defined the perturbation to the magnetic field as $\eta \boldsymbol{F}_i = \nabla^\perp f^\psi_i$. Here, the relationship between $\psi$ and $\bB$ is unchanged from equation \eqref{eq:psi def}, so that the weighted incompressibility condition of $B$ in \eqref{eqn:divergence-condition} still holds in the SFLT RSW-MHD dynamics. From the dynamics of $\eta$ and $\psi$, one may obtain the dynamics of $\bB$ as
\begin{align}
    \diff \bB +  \left(\bs{u} \cdot \nabla \bB -\bB \cdot \nabla \bs{u}\right)\,\diff t + \left(\bu \cdot \nabla \boldsymbol{F}_i - \boldsymbol{F}_i \cdot \nabla \bu\right) \circ \diff W_t^i = \frac{\boldsymbol{F}_i}{\eta}\nabla\cdot(\bu \eta)\circ \diff W^i_t\,,
\end{align}
although this is not equivalent to having introduced the SFLT type perturbation into the RSW-MHD Poisson structure \eqref{eqn:RSWMHD PB}, since the term $\eta^{-1}\boldsymbol{F}_i\nabla\cdot(\bu \eta)$ is required to preserve the condition $\nabla\cdot(\bu\eta)$.

\begin{remark}[Conservation of energy]
    The SFLT RSW-MHD equations \eqref{eq:SFLT RSWMHD LP eta}-\eqref{eq:SFLT RSWMHD LP m} preserve the domain integrated Hamiltonian of the deterministic RSW-MHD equations given by \eqref{TRSW-MHD-Ham}. This result may be obtained by noticing that both the deterministic and stochastic Poisson structures of the SFLT RWS-MHD equations appearing in \eqref{eqn:SFLT RSWMHD PB} are skew-symmetric under the duality pairing of the Lie algebra and its dual. As both Poisson structures are applied to the same Hamiltonian, it follows that $\wt{H}$, $\wt{H}$ is conserved by the SFLT dynamics.
\end{remark}

\begin{remark}
For a closed material loop of fluid, $c(u)$, moving with the flow generated by the semi-martingale vector field $u$, the Kelvin-Noether theorem for the SFLT RSW-MHD equations \eqref{eq:SFLT RSWMHD VC} is given by
\begin{equation}
\begin{aligned}
    \diff \oint_{c(u)} \left( u^\flat + \frac{1}{\mathrm{Ro}}R \right) &=\oint_{c(u)}\left(\diff + \mathcal{L}_{u\,\diff t} \right) \left( u^\flat + \frac{1}{\mathrm{Ro}}R \right) \\
    &= \oint_{c(u)}\left(-d\frac{\delta \wt{H}}{\delta\eta}\,\diff t + \frac{1}{\eta}\left(-\ad^*_u f^m_i + \frac{\delta \wt{H}}{\delta\eta}\nabla \psi \,\diff t + \frac{\delta \wt{H}}{\delta\eta}\nabla f^\psi_i \circ \diff W^i_t\right) \right) 
    \\
    &=\oint_{c(u)} \left(\frac{1}{\mu^2}(\nabla^\perp\cdot \bB)\bB^\perp \,\diff t \,\diff t + \left(\nabla^\perp\cdot \bB\right)\boldsymbol{F}_i^\perp \circ \diff W^i_t\right)\cdot d\bx \\
    & \qquad - \oint_{c(u)} \frac{1}{\eta}\left(\left(\nabla^\perp\cdot \bs{f}^m_i\right)\bu^\perp + \nabla\left(\bu\cdot\bs{f}^m_i \right) + \bs{f}^m_i\nabla\cdot\bu\right)\cdot d\bx \circ \diff W^i_t
    \,.
\end{aligned}
\end{equation}
In the circulation dynamics, one sees that the momentum perturbation $f^m_i$ and the magnetic field perturbation $\boldsymbol{F}_i$ both generate circulation. The forcing due to $f^m_i$ takes the form of Stokes drift and the forcing due to $\boldsymbol{F}_i$ takes the form of the $J\times B$ force defined by the magnetic field perturbation, $\boldsymbol{F}_i$.
\end{remark}

\section{Conclusion and outlook}\label{Conc-Out-sec} 
The present work has made several distinct contributions. In Section \ref{Intro-RSW-MHD-sec} we investigated the geometric structures of the Rotating Shallow Water Magenetohydrodynamics (RSW-MHD) equations proposed in Gilman \cite{gilman2000magnetohydrodynamic} as a model of solar tachocline dynamics. In Subsection \ref{RSW-sec}, the RSW-MHD equations were rederived via Lie group invariant variational principles using the Euler-Poincar\'e formalism pioneered in \cite{HMR1998}. Two equivalent representations of the prognostic variable for magnetic effects were developed, in terms of either the magnetic field $B$ or the magnetic stream function $\psi$. In Subsection \ref{Ham-RSW-MHD-sec}, the Lie--Poisson Hamiltonian formulation of the RSW-MHD equations was constructed via a Lie group reduced Legendre transform in either the variables involving $B$ or those involving $\psi$. In Subsection \ref{Ham-RSW-MHD-sec}, we also characterized the Casimir functions of the RSW-MHD system and demonstrated the equivalence of the Lie-Poisson bracket derived here to the potential vorticity bracket of the RSW-MHD equations derived previously, e.g., in \cite{dellar2002hamiltonian}.

In Section \ref{TRSW-MHD-sec}, we extended the treatment of RSW-MHD in Section \ref{Intro-RSW-MHD-sec} to include thermal gradients in the variational principle and thereby derive the \emph{Thermal} Rotating Shallow Water MHD (TRSW-MHD) model. 

In Section \ref{Stoch-RSW-MHD}, we constructed stochastic RSW-MHD models using two distinct and complementary types of structure preserving stochastic perturbations. These two methodologies for stochastic perturbation are known within the literature as Stochastic Advection by Lie Transport (SALT) and Stochastic Forcing by Lie Transport (SFLT). The SALT RSW-MHD model, presented in Subsection \ref{sec:SALT} preserves the deterministic Casimir functions and possesses stochastic $J \times B$ forces from which the uncertainty in the \emph{current density} can potentially be calibrated through data. On the other hand, the SFLT RSW-MHD model, presented in Subsection \ref{sec:SFLT} preserves the deterministic energy and also possesses stochastic $J \times B$ forces by which the \emph{magnetic field} can potentially be calibrated through data.

\paragraph{Open problems and future work.}
The present work has assembled a modeling framework for quasi-neutral plasma dynamics and its stochastic perturbations, which we expect may find utility in many other domains. In future work, we plan to use stochastic perturbations of MHD flows to model the effects of unresolved fluctuations. In particular, we plan to quantify uncertainty in stochastic plasma models via ensemble simulations at coarse grid resolutions and compared with either fine grid resolution simulations or highly resolved observational data. Our earlier efforts in developing and applying this procedure in geophysical fluid dynamics (GFD) have already proven its potential utility in plasma physics. For examples of this procedure in GFD, see, e.g., \cite{CCHOS18a} for the SALT approach and \cite{HP2023} for the SFLT approach. 

There are also several open problems that may be addressed in this framework. For example, the stochastic variational methods applied in the present paper may be transferred to a variety of other plasma models, such as multi-fluid plasma dynamics \cite{Holm1987}, as well as Maxwell-Vlasov and other hybrid kinetic/continuum plasma dynamics \cite{CHHM1998, HT2012a}.

The mean field approximation of the SALT and SFLT modelling approaches may also be applied to plasma models. Examples of these mean-field methods are the Lagrangian Averaged SALT approach \cite{DHL2020} and Eulerian Averaged SFLT \cite{HH2021a} approach.

The structure preserving stochastic perturbations applied in the present paper have recently been generalised to go beyond Brownian stochastic flows to include flows on geometric rough paths. See, e.g. \citep{CHLN2022,DHP2023} where the flows on geometric rough paths are introduced in a Lie group invariant variational principle. These structure preserving rough path perturbations can be used to derive models for plasma dynamics on rough paths, where the same uncertainty estimation and data calibration procedures for stochastic plasma dynamics can be applied.

\subsection*{Acknowledgements} 
We are grateful to J. Woodfield, C. Cotter, D. Crisan, T. Diamantakis, and H. Dumpty for several thoughtful suggestions during the course of this work which have improved or clarified the interpretation of its results. 
DH and RH were partially supported during the present work by Office of Naval Research (ONR) grant award N00014-22-1-2082, Stochastic Parameterization of Ocean Turbulence for Observational Networks. DH and OS were partially supported during the present work by European Research Council (ERC) Synergy grant Stochastic Transport in Upper Ocean Dynamics (STUOD) -- DLV-856408. OS acknowledges funding for a research fellowship from Quadrature Climate Foundation, which has partially supported his contribution to this project.

\bibliographystyle{apalike}
\bibliography{main}

\end{document}